\documentclass[hidelinks,onefignum,onetabnum]{siamart220329}

\usepackage{lipsum}
\usepackage{amsmath,amssymb, amsfonts}
\usepackage{color}
\usepackage{graphicx}

\definecolor{myblue}{RGB}{94, 129, 181}
\definecolor{myorange}{RGB}{225, 156, 36}
\definecolor{mygreen}{RGB}{143, 176, 50}
\definecolor{myred}{RGB}{235, 98, 53}

\usepackage[T1]{fontenc}
\usepackage{charter}

\usepackage{mathtools}

\usepackage{setspace}

\usepackage{epstopdf}
\usepackage{algorithmic}
\ifpdf
  \DeclareGraphicsExtensions{.eps,.pdf,.png,.jpg}
\else
  \DeclareGraphicsExtensions{.eps}
\fi


\def\E{\mathbb{E}}

\def\bP{\mathbb P}

\def\R{\mathbb{R}}

\def\sA{{\mathcal A}}

\def\sF{{\mathcal F}}

\def\ub{{\underline{\beta}}}
\def\ob{{\overline{\beta}}}

\def\eps{\varepsilon}

\numberwithin{equation}{section}
\theoremstyle{plain}                

\newtheorem{assumption}[theorem]{Assumption}
\newtheorem{remark}{Remark}[section]


\title{Radner equilibrium with population growth\thanks{{\bf Funding:} Jin Hyuk Choi is supported by the National Research Foundation of Korea (NRF) grant funded by the Korea government (MSIT) (No. 002086221G0001278).}}

\author{Jin Hyuk Choi\thanks{Ulsan National Institute of Science and Technology, Ulsan, South Korea (\email{jchoi@unist.ac.kr}).}
\and Kim Weston\thanks{Department of Mathematics, Rutgers University, NJ, USA (\email{kw552@rutgers.edu}).}
}

\usepackage{amsopn}


\ifpdf
\hypersetup{
  pdftitle={A multi-agent targeted trading equilibrium with transaction costs},
  pdfauthor={J. Choi, J. Duraj, and K. Weston}
}
\fi

\begin{document}

\maketitle

\begin{abstract}
We prove the existence of a Radner equilibrium in a model with population growth and analyze the effects on asset prices.  A finite population of agents grows indefinitely at a Poisson rate, while receiving unspanned income and choosing between consumption and investing into an annuity with infinitely-lived exponential preferences.  After establishing the existence of an equilibrium for a truncated number of agents, we prove that an equilibrium exists for the model with unlimited population growth.  
Our numerics show that increasing the birth rate reduces oscillations in the equilibrium annuity price, and when younger agents prioritize the present more than older agents, the equilibrium annuity price rises compared to a uniform demographic.
\end{abstract}

\begin{keywords}
Radner equilibrium, Incompleteness, Population changes, Poisson process
\end{keywords}

\begin{MSCcodes}
91G30, 91B51, 60G55
\end{MSCcodes}

\section{Introduction}\label{section:intro}

We study a continuous-time financial Radner equilibrium with a population of economic agents that can grow at a Poisson birth rate.  Traditionally, Radner equilibria have a fixed number of agents, whereas non-financial-economic notions of equilibrium often allow for population changes.\footnote{For example, models in mathematical biology and ecology study population dynamics, such as in Bolnick et. al. \cite{AABB11TEE} and Durrett and Levin~\cite{DL94RSL}.}  Whether there are a finite number of agents, such as in Duffie and Huang~\cite{DH85Econ}, a continuum of agents, such as in Mas-Colell and Vives~\cite{MV93RES}, or overlapping generations of agents, such as in Balasko and Shell~\cite{BS80JET}, the total mass of agents does not change over time.  In mathematical finance, incomplete Radner equilibria have a long history, including works by Cuoco and He~\cite{CH94wp}, Basak and Cuoco~\cite{BC98RFS}, Christensen et. al.~\cite{CLM12JET}, Choi and Larsen~\cite{CL15FS}, Kardaras et. al.~\cite{KXZ22}, and Escariaza et. al.~\cite{ESX22AAP}.  Our focus is on the previously unstudied aspect of incomplete Radner equilibrium that allows the population size to grow.

We introduce a model with a growing population in a Radner equilibrium.  Since market clearing must hold at all times, the market clearing equations shift after each agent birth, causing the price and agent strategies to shift too.  Thus, jumps in the population size cause jumps in the equilibrium-characterizing equation system, leading to a new form of equation system.

The agents have running consumption and income streams on an infinite time horizon with exponential preferences.  The financial market is incomplete and consists of a traded annuity, which pays a dividend stream of one per unit time.  The financial market is similar to that of Weston and \v{Z}itkovi\'{c}~\cite{WZ20FS}, who study a financial equilibrium with an annuity as the only traded security.  We make the assumption that the agents' income streams are arithmetic Brownian motions with drift, and the resulting equilibrium annuity price is a function of the number of agents in the economy.

Equilibrium is characterized by an infinite-dimensional recursion based on the number of agents in the economy.  The recursive system consists of one equation describing annuity prices and one equation for each of the infinitely many agents.  
In order to prove the existence of a solution to the equilibrium-characterizing recursive system, we study a truncated model that allows for only a finite number of agent births.  We prove the existence of a solution to the truncated model's recursive system.  Together with uniform bounds, we pass the truncated number of agent births to infinity to obtain a solution to the original recursive system.  The analysis of the truncated system involves a careful investigation of Lambert's $W$ function, which also appeared in Larsen and Sae Sue~\cite{LS16MAFE} when studying a Radner equilibrium with jumps in the agents' income streams.

The new feature of our model is the growing number of agents, and we analyze the impact that the birth rate has on prices.  Through analysis and numerics, we see that increases in the birth rate reduce the amplitude of oscillations in the annuity price.  The increase in population causes shifts in the population's demographics in terms of aggregate time preferences, income, and risk-aversion parameters.  
When younger and older agents place different emphases on the present time compared to the future, the equilibrium annuity price changes compared to the homogeneous demographic case.  Through a numerical example, we exhibit that if younger agents place a larger emphasis on the present day compared to older agents, then annuity prices increase compared to the homogeneous case.

The paper is arranged as follows.  We set up the model in Section~\ref{section:setting}.  Section~\ref{section:main} presents the main results, Theorems~\ref{thm:existence-infinite} and~\ref{thm:eq-infinite}, which characterize and prove the existence of an equilibrium with a growing economy.  An equilibrium is constructed by analyzing a truncated model, proving uniform bounds, and passing to limiting case in Lemma~\ref{solutions1} and Theorem~\ref{thm:finite-verification}.  The proofs are contained in Section~\ref{section:proofs}.  Section~\ref{section:examples} provides numerical examples and investigates the impact of population growth in equilibrium.

\section{Model setting}\label{section:setting}

We study a continuous-time financial equilibrium in a pure exchange economy with an infinite time horizon.  We let $B=\{B_t\}_{t\geq 0}$ be a Brownian motion and $N=\{N_t\}_{t\geq 0}$ be a Poisson process with intensity $\lambda>0$ on the probability space $(\Omega,\sF,\bP)$.  $N_t$ represents the population size at time $t$.  The filtration $\{\sF_t\}_{t\geq 0}$ is generated by the processes $B$ and $N$.  For a random variable $X$ and stopping time $\tau$, we use the notation $\E_\tau[X] = \E[X|\sF_\tau]$.  For a stopping time $\tau$, $t\geq \tau$ refers to being on the event $\{t\geq \tau\}$.

\smallskip

\noindent{\bf The financial market.} 
The financial market consists of a traded annuity.  It is in one-net supply, and its price is denominated in units of a single consumption good.  The annuity pays a constant dividend stream of one per unit time, as in Weston and \v{Z}itkovi\'{c}~\cite{WZ20FS}.

The annuity price will be determined endogenously in equilibrium.  We focus on models where the annuity price, $A$, depends on the total number of agents, expressed as $A = \{A(N_t)\}_{t\geq 0}$. The annuity price process is an output of equilibrium.

\smallskip

\noindent{\bf Economic agents.} 
At time $t\geq 0$, the economy consists of $N_t<\infty$ economic agents who have the choice between consumption of the real good and investment in the financial market. Agents enter the economy at a $\text{Poisson}(\lambda)$ rate.  For $i \in \mathbb{N}$, we define
\begin{align*}
\tau_i:=\inf \left \{ t\geq 0 : \, N_t=i    \right \},
\end{align*}
and note that $\tau_1 = \ldots = \tau_{N_0} = 0$.

For agent $i \in \mathbb{N}$, positions held in the annuity over time are denoted $\theta = \{\theta_t\}_{t\geq\tau_i}$. Consumption occurs at the rate $c = \{c_t\}_{t\geq\tau_i}$ per unit time.  Agent $i$ receives the stochastic income stream $\eps_i = \{\eps_{i, t}\}_{t\geq \tau_i}$ with dynamics
$$
  d\eps_{i,t} = \mu_i(N_t) dt + \sigma_i(N_t) dB_t, \quad  \eps_{i,\tau_i}\in\R, \ \text{on } \{t\geq \tau_i\},
$$  
where $\mu_i$ and $\sigma_i$ are real-valued functions of $N_t$.  Agents $i=1,\ldots, N_0$ are endowed with $\theta_{i,0-}=\theta_{i,0}\in\R$ shares in the annuity.  Since the annuity is in one-net supply, we assume that $\sum_{i=1}^{N_0} \theta_{i,0} = 1$.  Agents $i>N_0$ are endowed with $\theta_{i,\tau_i-}=\theta_{i,\tau_i} = 0$ shares in the annuity and can begin consuming and investing after their birth at time $\tau_i$.
\begin{definition}\label{admissibility}
  For a given agent $i \in \mathbb{N}$, the investment and consumption strategies $\theta$ and $c$ are {\bf admissible for agent $i$} if the following three conditions hold.
  \begin{enumerate}
    \item The processes $\theta$ and $c$ are progressively measurable with $\int_{\tau_i}^t |c_s|ds<\infty$ $\bP$-a.s. on $\{t\geq \tau_i\}$.  There exists a constant $M<\infty$ such that
    \begin{equation}\label{theta-growth}
      |\theta_t| \leq M(1+t), \quad \text{on}\quad \{t\geq \tau_i\}.
    \end{equation}
    
    \item On $\{t\geq\tau_i\}$, the associated wealth process $X_t=\theta_t A(N_t)$ satisfies
    \begin{equation}\label{self-financing}
      dX_t = \theta_{t-} (dA(N_t) + dt) + (\eps_{i,t}-c_t)dt.
    \end{equation}
    
    \item The transversality condition holds:
      \begin{align}\label{transversality}
        \lim_{t\rightarrow\infty} \E_{\tau_i}\left[A(N_t)\exp\left(-\rho_i(t-\tau_i)
       -\alpha_i \left(\theta_t+\eps_{i,t}\right)\right)\right]=0.
      \end{align}
  \end{enumerate}
  If $\theta$ and $c$ are admissible for agent $i$, then we write $(\theta, c)\in\sA_i$.

\end{definition}

The agents aim to maximize expected utility through continuous consumption over an infinite time horizon.  Each agent $i \in \mathbb{N}$ has a risk aversion parameter $\alpha_i>0$ and time preference parameter $\rho_i> 0$.  Agent $i$'s objective is
\begin{equation}\label{optimization}
  \sup_{(\theta, c)\in\sA_i} \E_{\tau_i}\left[\int_{\tau_i}^\infty -\exp\left(-\rho_i(t-\tau_i) - \alpha_i c_t\right) dt\right].
\end{equation}

We make the following standing assumption.

\begin{assumption}\label{asm:bdd} The agent income stream parameters are bounded. That is, for each agent $i \in \mathbb{N}$, we have $\{\mu_i(j)\}_{j\geq N_0\vee i}$ and $\{\sigma_i(j)\}_{j\geq N_0\vee i}$ are bounded.
\end{assumption}

\smallskip

\noindent{\bf Equilibrium.} Our notion of equilibrium is a Radner equilibrium, but it differs from typical Radner equilibria due to the market clearing conditions adjusting with the birth of each new agent.
\begin{definition} Strategies $\{\theta_i,c_i\}_{i\in \mathbb{N}}$ and a price function $A = \{A(j)\}_{j\geq N_0}$ form an {\bf equilibrium} if
\begin{enumerate}
  \item {\it Strategies are optimal:} For each agent $i\in \mathbb{N}$, $(\theta_i, c_i)\in \mathcal{A}_i$ and solve \eqref{optimization}.
  
  \item {\it Markets clear:} For all $t\geq 0$, we have
  \begin{align*}
    \sum_{i=1}^{N_t} \theta_{i,t} &= 1, \quad    \sum_{i=1}^{N_t} c_{i,t} = 1 + \sum_{i=1}^{N_t} \eps_{i,t}.
  \end{align*}

\end{enumerate}

\end{definition}

\section{Equilibrium construction and existence}\label{section:main}
We construct an equilibrium using a solution $\left\{ A(j), k_i(j)\right\}_{j\geq N_0, 1\leq i\leq j}$ to the following system of recursive equations:
 \begin{align}
&   \frac{1}{A(j)} = \beta_{\Sigma}(j)+\lambda - \left( \alpha_{\Sigma}(j)    \sum_{i=1}^j \frac{ \lambda A(j+1) e^{-\alpha_i (k_i(j+1)-k_i(j))}}{\alpha_i}\right)\frac{1}{A(j)}, \label{Aj_original_infinite}\\
&k_i(j) =    \frac{(\beta_i(j)+\lambda) A(j)-1}{\alpha_i}  - \frac{ \lambda A(j+1) e^{-\alpha_i (k_i(j+1)-k_i(j))}}{\alpha_i}. \label{ki_original_infinite}
\end{align}
Here, the values $\{\alpha_{\Sigma}(j)\}_{j\geq N_0}$, $\{\beta_i(j)\}_{j\geq N_0, 1\leq i\leq j}$, and $\{\beta_{\Sigma}(j)\}_{j\geq N_0}$ are given by
\begin{align}
\begin{split}\label{param-defs}
  \alpha_{\Sigma}(j) &:= \left(\sum_{i=1}^{j} \frac{1}{\alpha_i}\right)^{-1} \quad \text{so that} \quad \frac{1}{\alpha_{\Sigma}(j)} = \sum_{i=1}^{j} \frac{1}{\alpha_i},\\
  \beta_i(j) &:= \rho_i +\alpha_i\mu_i(j) - \frac12\alpha_i^2 \sigma_i^2(j),\\
  \beta_{\Sigma}(j) &:= \alpha_\Sigma(j) \left(\sum_{i=1}^j \frac{\beta_i(j)}{\alpha_i}\right)
    \quad \text{so that} \quad \frac{\beta_\Sigma(j)}{\alpha_\Sigma(j)} = \sum_{i=1}^j \frac{\beta_i(j)}{\alpha_i}.
\end{split}
\end{align}
The value $\alpha_\Sigma(j)$ is the aggregate risk aversion when there are $j$ agents in the economy.  The value $\beta_i(j)$ is agent $i$'s time preference parameter adjusted for her income stream and risk aversion.  As we shall see below in Theorem~\ref{thm:finite-verification} (when applied with $N_0=j$ and $m=0$), the aggregate term $\beta_\Sigma(j)$ is the reciprocal of the (constant) price of the traded annuity in a non-growing economy with $j$ fixed agents.  In such a model with a fixed number $j$ of agents, $\beta_\Sigma(j)$ would represent the interest rate. The terms $(\beta_i(j)+\lambda)$ can be interpreted as agent $i$'s shadow risk premium for investing in the annuity.  In \eqref{ki_original_infinite}, $k_i(j)$ incorporates both a risk premium term and a hedging term that takes into account the annuity's price change from $j$ to $(j+1)$ total agents.\label{page-label-1}

We will impose the following assumption on the values $\{\beta_{i}(j)\}_{j\geq N_0,\, 1\leq i \leq j}$.

\begin{assumption}\label{asm:beta-bdd}
There exist constants $0<\ub\leq \ob $ such that $\ub\leq \beta_{i}(j)\leq \ob$ for $j\geq N_0$ and $1\leq i \leq j$. This assumption implies that $ \ub \leq  \beta_\Sigma(j)  \leq \ob$ for $j\geq N_0$.
\end{assumption}
\begin{theorem}\label{thm:existence-infinite}
  Suppose that Assumptions \ref{asm:bdd} and \ref{asm:beta-bdd} hold. Then, there exists a solution  $\{A(j), k_i(j) \}_{j\geq N_0, 1\leq i \leq j}$ to \eqref{Aj_original_infinite}--\eqref{ki_original_infinite} that satisfies the following inequalities: 
\begin{align}
&\frac{1}{\ob + \lambda} < A(j) \leq \frac{1}{\ub}\quad \textrm{and}\quad A(j) e^{-\alpha_i k_i(j)} \leq \frac{1}{\ub}, \label{hypo-infinite} \\
&- \ln\left( \frac{\ob + \lambda}{\ub} \right) \leq \alpha_i k_i(j) \leq  \frac{\ob + \lambda}{\ub}-1. \label{ki_bounds}
\end{align}   
\end{theorem}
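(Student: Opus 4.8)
The plan is to solve a sequence of truncated finite systems (allowing only $m$ births) by a backward recursion, to establish the bounds \eqref{hypo-infinite}--\eqref{ki_bounds} uniformly along the truncation, and then to extract a pointwise-convergent subsequence whose limit solves \eqref{Aj_original_infinite}--\eqref{ki_original_infinite}; only Assumption~\ref{asm:beta-bdd} is used. The starting point is that one level of the recursion decouples. Fix $j$, treat the level-$(j+1)$ data $A(j+1),\{k_i(j+1)\}_{1\le i\le j+1}$ as known, and set $b_i:=\lambda A(j+1)e^{-\alpha_i k_i(j+1)}>0$ and, for a candidate $A>0$, $r_i:=r_i(A):=(\beta_i(j)+\lambda)A-1$. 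With $y_i:=\alpha_i k_i(j)$, equation \eqref{ki_original_infinite} reads $g_i(y_i)=r_i$ where $g_i(y):=y+b_i e^{y}$ is a strictly increasing bijection of $\R$; hence $y_i=r_i-W(b_i e^{r_i})$ (with $W$ Lambert's function), which is strictly increasing in $A$. Multiplying \eqref{Aj_original_infinite} by $A(j)$ and using \eqref{ki_original_infinite} together with the identities in \eqref{param-defs}, one checks that, given \eqref{ki_original_infinite} at level $j$, equation \eqref{Aj_original_infinite} at level $j$ is equivalent to $\sum_{i=1}^{j}k_i(j)=0$. So the level-$j$ step amounts to finding a root $A(j)>0$ of the continuous, strictly increasing function
\[
F_j(A):=\sum_{i=1}^j \frac{1}{\alpha_i}\Big(r_i(A)-W\big(b_i e^{r_i(A)}\big)\Big),
\]
after which we set $\alpha_i k_i(j):=r_i(A(j))-W(b_i e^{r_i(A(j))})$.

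For $m\in\bN$, solve the truncated system on levels $N_0\le j\le N_0+m$ by downward induction on $j$, with terminal data $A(N_0+m)=1/\beta_\Sigma(N_0+m)$ and $\alpha_i k_i(N_0+m)=\beta_i(N_0+m)A(N_0+m)-1$, corresponding to the fixed-economy equilibrium of Theorem~\ref{thm:finite-verification}; this satisfies \eqref{hypo-infinite}--\eqref{ki_bounds} by a direct computation using $\ub\le\beta_i(N_0+m),\beta_\Sigma(N_0+m)\le\ob$ and the elementary inequality $\ln x\le x-1$. For the inductive step, assume \eqref{hypo-infinite}--\eqref{ki_bounds} at level $j+1$; then $b_i\le\lambda/\ub$ for $1\le i\le j$. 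At $A=1/(\ob+\lambda)$ one has $r_i\le 0$, hence $y_i<r_i\le 0$ and $F_j(1/(\ob+\lambda))<0$; at $A=1/\ub$ one has $r_i\ge\lambda/\ub\ge b_i$, so $W(b_i e^{r_i})\le r_i$, hence $y_i\ge 0$ and $F_j(1/\ub)\ge 0$. By the intermediate value theorem and strict monotonicity there is a unique root $A(j)\in(1/(\ob+\lambda),1/\ub]$, which is the first bound in \eqref{hypo-infinite}. The upper bound in \eqref{ki_bounds} is $y_i<r_i\le(\ob+\lambda)/\ub-1$; the lower bound, by monotonicity of $g_i$, reduces to $g_i\big(-\ln\frac{\ob+\lambda}{\ub}\big)\le r_i$, which after using $b_i\le\lambda/\ub$ and $r_i>(\ub-\ob)/(\ob+\lambda)$ becomes exactly $\ln t\le t-1$ with $t=\ub/(\ob+\lambda)$; and $A(j)e^{-\alpha_i k_i(j)}\le 1/\ub$ is equivalent to $\alpha_i k_i(j)\ge\ln(\ub A(j))$, which by the same monotonicity reduces to $\ln(\ub A(j))+b_i\ub A(j)\le r_i$ and then, via $b_i\le\lambda/\ub$ and $\beta_i(j)\ge\ub$, to $\ln s\le s-1$ with $s=\ub A(j)$. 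This closes the induction, producing for each $m$ a solution of the truncated system obeying \eqref{hypo-infinite}--\eqref{ki_bounds} with constants independent of $m$.

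By the uniform bounds, for each fixed $j\ge N_0$ and $1\le i\le j$ the sequences $\{A^{(m)}(j)\}_m$ and $\{k_i^{(m)}(j)\}_m$ (defined once $N_0+m\ge j$) are bounded, so a diagonal argument yields a subsequence along which $A^{(m)}(j)\to A(j)$ and $k_i^{(m)}(j)\to k_i(j)$ for every such $j,i$. Each equation in \eqref{Aj_original_infinite}--\eqref{ki_original_infinite} at a fixed level $j$ involves only finitely many of these variables and is continuous in them, and is satisfied by the truncated-$m$ solution once $N_0+m>j$; hence the limit solves \eqref{Aj_original_infinite}--\eqref{ki_original_infinite}. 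The limit inherits the non-strict versions of \eqref{hypo-infinite}--\eqref{ki_bounds}, and the strict inequality $A(j)>1/(\ob+\lambda)$ is then recovered directly from \eqref{Aj_original_infinite}, whose right-hand side equals $\beta_\Sigma(j)+\lambda$ minus a strictly positive quantity and is therefore strictly below $\ob+\lambda$.

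The step I expect to be the main obstacle is the inductive propagation of all four bounds in \eqref{hypo-infinite}--\eqref{ki_bounds} from level $j+1$ to level $j$: the sign of $F_j$ at $A=1/\ub$, the lower bound on $\alpha_i k_i(j)$, and the bound $A(j)e^{-\alpha_i k_i(j)}\le 1/\ub$ each hinge on exactly the right combination of the level-$(j+1)$ hypotheses, the monotonicity of $g_i$ (equivalently, elementary properties of $W$), and the convexity inequality $\ln x\le x-1$, and lining these up is the technical heart of the argument. A secondary point needing care is choosing the truncated model's terminal condition so that Theorem~\ref{thm:finite-verification} applies and the base case of the induction genuinely satisfies the claimed bounds.
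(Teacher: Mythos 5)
Your proposal is correct and follows the same two-stage strategy as the paper: solve truncated systems by backward induction using Lambert's $W$ function, establish the bounds \eqref{hypo-infinite}--\eqref{ki_bounds} uniformly in the truncation parameter $m$, and pass to a limit via a diagonal subsequence argument. The only cosmetic differences are that you reformulate \eqref{Aj_original_infinite} as $\sum_i k_i(j)=0$ and work with a strictly increasing $F_j$ where the paper uses a strictly decreasing, concave $f$, and you verify the bounds via $\ln x\le x-1$ where the paper invokes the equivalent facts $\max_x(1+x)e^{-x}=1$ and $\max_x x e^{-bx+1}=1/b$.
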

The following result, Theorem \ref{thm:eq-infinite} constructs an equilibrium, given a solution to \eqref{Aj_original_infinite}--\eqref{ki_original_infinite} supplied by Theorem~\ref{thm:existence-infinite}.

\begin{theorem}\label{thm:eq-infinite}
Suppose that Assumptions \ref{asm:bdd} and \ref{asm:beta-bdd} hold. By Theorem~\ref{thm:existence-infinite}, there exists a solution $\{A(j), k_i(j) \}_{j\geq N_0, 1\leq i \leq j}$ to \eqref{Aj_original_infinite}--\eqref{ki_original_infinite} satisfying \eqref{hypo-infinite}--\eqref{ki_bounds}. Then, $\{ A(N_t)\}_{t\geq 0}$ is an equilibrium annuity price corresponding to an equilibrium with optimal investment/consumption processes $\{ \theta_{i,t}, c_{i,t} \}_{t\geq \tau_i}$ defined by
\begin{equation}
\begin{split}\label{Xici}
  \theta_{i,t} &:= \theta_{i,\tau_i} - \int_{\tau_i}^t \frac{k_i(N_u)}{A(N_u)}du \quad \textrm{on}\quad  \{t\geq \tau_i\}, \\
c_{i,t}&:=\theta_{i,t}+k_i(N_t)+\eps_{i,t} \quad \textrm{on}\quad \{t\geq \tau_i\}.
\end{split}
\end{equation}
\end{theorem}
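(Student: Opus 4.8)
The plan is to verify the three requirements defining an equilibrium: that markets clear, that the candidate strategies $\{\theta_i,c_i\}_{i\in\bN}$ of \eqref{Xici} are admissible, and that each solves \eqref{optimization}. Market clearing is essentially algebraic. Summing \eqref{ki_original_infinite} over $i=1,\dots,j$, invoking the aggregation identities \eqref{param-defs} for $\alpha_\Sigma(j)$ and $\beta_\Sigma(j)$, and substituting \eqref{Aj_original_infinite}, one finds that all terms cancel, so $\sum_{i=1}^{j}k_i(j)=0$ for every $j\geq N_0$. Hence $\sum_{i=1}^{N_t}\theta_{i,t}=1$ for all $t$: it equals $1$ at $t=0$ by the normalization $\sum_{i=1}^{N_0}\theta_{i,0}=1$; it is continuous across each birth time $\tau_{j+1}$ (the entrant holds $0$ shares and the incumbents' holdings are absolutely continuous in $t$); and on $\{N_t=j\}$ its time derivative is $-A(j)^{-1}\sum_{i=1}^{j}k_i(j)=0$. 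Summing $c_{i,t}=\theta_{i,t}+k_i(N_t)+\eps_{i,t}$ over $i\leq N_t$ and using $\sum_{i=1}^{N_t}\theta_{i,t}=1$ and $\sum_{i=1}^{N_t}k_i(N_t)=0$ then gives the consumption clearing identity.

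For admissibility of $(\theta_i,c_i)$, the bounds \eqref{hypo-infinite}--\eqref{ki_bounds} make $|k_i(j)/A(j)|$ bounded uniformly in $j$, so $|\theta_{i,t}|\leq|\theta_{i,\tau_i}|+C_i(t-\tau_i)$, which is \eqref{theta-growth}, and $\int_{\tau_i}^{t}|c_{i,s}|\,ds<\infty$ follows since $\eps_i$ has continuous paths. The self-financing condition \eqref{self-financing} is checked directly: because $\theta_{i,\cdot}$ is absolutely continuous, It\^o's product rule gives $dX_t=\theta_{i,t-}\,dA(N_t)+A(N_{t-})\,d\theta_{i,t}=\theta_{i,t-}\,dA(N_t)-k_i(N_t)\,dt$ for a.e.\ $t$, and this equals $\theta_{i,t-}(dA(N_t)+dt)+(\eps_{i,t}-c_{i,t})\,dt$ by the definition of $c_{i,t}$. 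For the transversality condition \eqref{transversality}, set $Y_t:=A(N_t)e^{-\rho_i(t-\tau_i)-\alpha_i(\theta_{i,t}+\eps_{i,t})}$; using \eqref{ki_original_infinite} and the identity $\int_{\tau_i}^{t}k_i(N_u)A(N_u)^{-1}\,du=\theta_{i,\tau_i}-\theta_{i,t}$, one rewrites $Y_t$, up to an $\sF_{\tau_i}$-measurable factor, as $A(N_t)\exp(-\int_{\tau_i}^{t}\beta_i(N_u)\,du)$ times the stochastic exponential of $-\alpha_i\int\sigma_i(N_u)\,dB_u$; conditioning on the path of $N$, which is independent of $B$, removes that exponential martingale, and then $\beta_i\geq\ub>0$ together with the bounds of Theorem~\ref{thm:existence-infinite} are used to conclude $\E_{\tau_i}[Y_t]\to0$.

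For optimality, introduce the candidate value function $V_i(t,x,e,j):=-A(j)e^{-\alpha_i k_i(j)}e^{-\rho_i(t-\tau_i)}\exp(-\alpha_i(x/A(j)+e))$, with $x$ interpreted as wealth and $x/A(j)$ as the share count. For an arbitrary admissible $(\theta,c)\in\sA_i$ with wealth $X$, It\^o's formula (the jumps of $N$ entering through the intensity $\lambda$) shows that the drift of $R_t:=\int_{\tau_i}^{t}-e^{-\rho_i(s-\tau_i)-\alpha_i c_s}\,ds+V_i(t,X_t,\eps_{i,t},N_t)$ equals the Hamilton--Jacobi--Bellman expression for \eqref{optimization} at the current state and control $c_t$; by the concavity of $c\mapsto-e^{-\alpha_i c}$ and by \eqref{ki_original_infinite} this drift is $\leq0$, and it vanishes identically along the candidate $(\theta_i,c_i)$, whose control is the pointwise maximizer. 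Thus $R$ is a local supermartingale, and a local martingale along the candidate; as $R\leq0$, localizing and using the finite-horizon bound $\E_{\tau_i}[\sup_{s\leq T}|V_i(s,X_s,\eps_{i,s},N_s)|]<\infty$ (from \eqref{theta-growth} and standard moment estimates for $\eps_i$) lets one pass to the limit inside the localization to obtain $\E_{\tau_i}[\int_{\tau_i}^{T}-e^{-\rho_i(s-\tau_i)-\alpha_i c_s}\,ds]+\E_{\tau_i}[V_i(T,X_T,\eps_{i,T},N_T)]\leq V_i(\tau_i,X_{\tau_i},\eps_{i,\tau_i},N_{\tau_i})$, with equality for $(\theta_i,c_i)$; the right-hand side depends only on the fixed endowment, not on the strategy. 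Sending $T\to\infty$, monotone convergence sends the first term to the objective of \eqref{optimization} for $(\theta,c)$, while $|\E_{\tau_i}[V_i(T,X_T,\eps_{i,T},N_T)]|\leq\frac{\ob+\lambda}{\ub}\E_{\tau_i}[A(N_T)e^{-\rho_i(T-\tau_i)-\alpha_i(\theta_T+\eps_{i,T})}]\to0$ by \eqref{transversality} and $e^{-\alpha_i k_i(j)}\leq\frac{\ob+\lambda}{\ub}$ from \eqref{ki_bounds}. Hence every admissible strategy attains objective at most $V_i(\tau_i,X_{\tau_i},\eps_{i,\tau_i},N_{\tau_i})$, realized by $(\theta_i,c_i)$, which is therefore optimal; combined with market clearing this establishes that $\{A(N_t)\}_{t\geq0}$ is an equilibrium annuity price with the asserted optimal strategies.

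The main obstacle is the transversality condition \eqref{transversality} for the candidate, i.e.\ $\E_{\tau_i}[Y_t]\to0$: one must extract decay from the uniform bounds of Theorem~\ref{thm:existence-infinite} even though the population $N_t$ grows without bound, which makes the relevant compensated drift delicate to control. A secondary difficulty is the integrability and interchange-of-limit bookkeeping in the verification step --- the true-martingale property of $R$ along the candidate and the finite-horizon domination --- which rests on moment estimates for the arithmetic Brownian income $\eps_i$ under Assumption~\ref{asm:bdd} and on the price bounds \eqref{hypo-infinite}. Everything up to market clearing is, by comparison, routine algebra.
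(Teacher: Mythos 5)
The overall structure of your proof — market clearing by showing $\sum_i k_i(j)=0$, admissibility by bounding through \eqref{hypo-infinite}--\eqref{ki_bounds}, and optimality via a verification argument using a candidate value function and Fenchel-type concavity — matches the paper's approach. Your treatment of market clearing and of the verification/optimality step are essentially equivalent to the paper's, modulo some vagueness about the integrability/localization bookkeeping (the paper makes this precise with Lemmas~\ref{lemma:mg1} and~\ref{lemma:mg}, showing $\E_{\tau_i}\!\left[\int_{\tau_i}^t W_s^2\,ds\right]<\infty$ and that the compensated Poisson integral has zero conditional expectation, which is cleaner than your $\sup_s$-moment claim but in the same spirit).

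However, your transversality argument has a genuine gap. You claim that, using the identity $\int_{\tau_i}^{t}k_i(N_u)A(N_u)^{-1}\,du=\theta_{i,\tau_i}-\theta_{i,t}$ and \eqref{ki_original_infinite}, $Y_t$ can be rewritten up to an $\sF_{\tau_i}$-measurable factor as $A(N_t)\exp\!\big(-\int_{\tau_i}^t\beta_i(N_u)\,du\big)$ times $\mathcal{E}\!\big(-\alpha_i\int\sigma_i(N_u)\,dB_u\big)$. This is not correct: substituting the explicit forms of $\theta_{i,\cdot}$ and $\eps_{i,\cdot}$ and completing the square on the $dB$-integral actually gives
\begin{align*}
Y_t = e^{-\alpha_i(\theta_{i,\tau_i}+\eps_{i,\tau_i})}\,A(N_t)\exp\!\left(-\int_{\tau_i}^{t}\Big[\beta_i(N_u)-\frac{\alpha_i k_i(N_u)}{A(N_u)}\Big]\,du\right)\mathcal{E}\!\Big(-\alpha_i\!\int\sigma_i(N_u)\,dB_u\Big)_t,
\end{align*}
and the extra $\alpha_i k_i(N_u)/A(N_u)$ term in the exponent does not drop out. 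Using \eqref{ki_original_infinite} one obtains $\beta_i(j)-\alpha_i k_i(j)/A(j)=\tfrac{1}{A(j)}-\lambda+\lambda\,\tfrac{A(j+1)e^{-\alpha_i k_i(j+1)}}{A(j)e^{-\alpha_i k_i(j)}}$, which by \eqref{Ak_bound} need not be bounded below by $\ub$ (or even be positive), so the simple ``$\beta_i\geq\ub>0$ gives decay'' conclusion fails. The whole subtlety of the transversality proof is precisely accounting for this jump-compensation term. The paper handles it by directly computing $dW_{i,t}$, using \eqref{ki_original_infinite} to collapse the drift to $-W_{i,t}/A(N_t)$ after compensating the Poisson integral, and then deducing an integral inequality $g(T+t)\leq g(T)-\ub\int_T^{T+t}g(s)\,ds$ for $g(T)=\E_{\tau_i}[-W_{i,T}]$, which forces $g\to0$. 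Your conditioning-on-$N$ route could in principle be completed by observing that $h_i(N_t)\exp\!\big(\int_{\tau_i}^t\big[\lambda-\lambda\,h_i(N_u+1)/h_i(N_u)\big]\,du\big)$ (with $h_i(j):=A(j)e^{-\alpha_i k_i(j)}$) is a true Poisson martingale, and then $1/A\geq\ub$ yields the decay; but that is exactly the missing ingredient your sketch omits.
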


In \eqref{Xici}, $k_i(N_\cdot)$ represents both the rate of decrease (increase) per unit of annuity in the number of shares invested and the increase (decrease) in consumption. The proofs of Theorems \ref{thm:existence-infinite} and \ref{thm:eq-infinite} are presented in Section~\ref{section:proofs}.  
The system \eqref{Aj_original_infinite}--\eqref{ki_original_infinite} comprises an infinite number of recursive equations and lacks a natural initial or terminal condition. Instead of addressing the original system directly, we first investigate a truncated system with a finite number of equations and a terminal condition in Lemma~\ref{solutions1} and Theorem~\ref{thm:finite-verification}, where we interpret the finite number $m$ as the total number of agents born, or added, to the economy.
\begin{lemma}\label{solutions1}
Let $m\geq 0$. Consider the system of equations such that for $1\leq i \leq N_0+m$,
 \begin{align}
&A(N_0+m)= \frac{1}{\beta_{\Sigma}(N_0+m)}, \quad k_{i}(N_0+m) =     \frac{\beta_{i}(N_0+m) A(N_0+m) -1}{\alpha_i} , \label{Ak_terminal}
\end{align}
and \eqref{Aj_original_infinite}--\eqref{ki_original_infinite} for $N_0\leq j < N_0+m$ and $1\leq i \leq j$.
Under Assumptions \ref{asm:bdd} and \ref{asm:beta-bdd}, the system has a unique solution $\{A(j), k_i(j) \}_{N_0 \leq j \leq N_0+m, \, 1\leq i \leq j}$ such that $A(j)>0$.
The solution satisfies \eqref{hypo-infinite}--\eqref{ki_bounds} for $N_0\leq j \leq N_0+m$ and $1\leq i \leq j$.
\end{lemma}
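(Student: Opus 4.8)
The plan is a backward induction on the number of agents $j$, running from the terminal level $j=N_0+m$ down to $j=N_0$. At level $j=N_0+m$ the values $A(N_0+m)$ and $k_i(N_0+m)$ are prescribed by \eqref{Ak_terminal}, and at each subsequent step we treat the already-constructed $\{A(j+1),k_i(j+1)\}$ as data and solve \eqref{Aj_original_infinite}--\eqref{ki_original_infinite} for $\{A(j),k_i(j)\}_{1\le i\le j}$ subject to $A(j)>0$. The induction hypothesis carried along is that \eqref{hypo-infinite}--\eqref{ki_bounds} hold at level $j+1$; I want to show it propagates to level $j$, which will simultaneously give existence, uniqueness (among solutions with all $A(j)>0$), and the claimed bounds.

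The heart of the argument is an algebraic reduction at each level. Fix $j$ and set $d_i:=\lambda A(j+1)e^{-\alpha_i k_i(j+1)}>0$, which is positive since $A(j+1)>0$. Multiplying \eqref{ki_original_infinite} by $\alpha_i$ and writing $y=\alpha_i k_i(j)$ turns it into $y+d_i e^{y}=(\beta_i(j)+\lambda)A(j)-1$; since $y\mapsto y+d_i e^{y}$ is a strictly increasing continuous bijection of $\R$, this determines $\alpha_i k_i(j)$ uniquely as a continuous, strictly increasing function of $A(j)$ (explicitly $k_i(j)=\tfrac1{\alpha_i}[(\beta_i(j)+\lambda)A(j)-1-W(d_i e^{(\beta_i(j)+\lambda)A(j)-1})]$ with $W$ the principal Lambert function, which enters only through elementary monotonicity). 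Substituting the resulting identity $d_i e^{\alpha_i k_i(j)}=(\beta_i(j)+\lambda)A(j)-1-\alpha_i k_i(j)$ into \eqref{Aj_original_infinite}, and using $\sum_i \tfrac1{\alpha_i}=\tfrac1{\alpha_\Sigma(j)}$ and $\sum_i\tfrac{\beta_i(j)}{\alpha_i}=\tfrac{\beta_\Sigma(j)}{\alpha_\Sigma(j)}$ from \eqref{param-defs}, the right-hand side of \eqref{Aj_original_infinite} collapses to $\tfrac1{A(j)}+\tfrac{\alpha_\Sigma(j)}{A(j)}\sum_{i=1}^j k_i(j)$. Hence, for $A(j)>0$, the pair of equations \eqref{Aj_original_infinite}--\eqref{ki_original_infinite} at level $j$ is equivalent to: $\alpha_i k_i(j)$ given by the monotone function of $A(j)$ above for each $i$, together with the single scalar constraint $\sum_{i=1}^j k_i(j)=0$.

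With the reduction in hand, I would define $\phi_j(a):=\sum_{i=1}^j k_i(j;a)$ for $a>0$, where $k_i(j;a)$ is the function just described; $\phi_j$ is continuous and strictly increasing, since each $k_i(j;\cdot)$ inherits these properties from the strictly increasing continuous bijection $y\mapsto y+d_i e^{y}$. Using $\beta_i(j)\in[\ub,\ob]$ and the induction hypothesis $A(j+1)e^{-\alpha_i k_i(j+1)}\le\tfrac1\ub$ (so $d_i\le\tfrac\lambda\ub$), one checks $\alpha_i k_i(j;\tfrac1{\ob+\lambda})<0$ for each $i$ (because $(\beta_i(j)+\lambda)\tfrac1{\ob+\lambda}-1\le 0<d_i$) and $\alpha_i k_i(j;\tfrac1\ub)\ge 0$ for each $i$ (because $(\beta_i(j)+\lambda)\tfrac1\ub-1\ge\tfrac\lambda\ub\ge d_i$); hence $\phi_j(\tfrac1{\ob+\lambda})<0\le\phi_j(\tfrac1\ub)$, and the intermediate value theorem together with strict monotonicity yields a unique $A(j)\in(\tfrac1{\ob+\lambda},\tfrac1\ub]$ with $\phi_j(A(j))=0$, after which $k_i(j):=k_i(j;A(j))$. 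This gives existence and uniqueness at level $j$, and the location of the zero is exactly the first line of \eqref{hypo-infinite}. The remaining bounds at level $j$ are then elementary: $\alpha_i k_i(j)\le (\beta_i(j)+\lambda)A(j)-1\le\tfrac{\ob+\lambda}\ub-1$ gives the upper bound in \eqref{ki_bounds}; the inequality $A(j)e^{-\alpha_i k_i(j)}\le\tfrac1\ub$ is, via monotonicity of $y\mapsto y+d_i e^{y}$, equivalent to $(\beta_i(j)+\lambda)A(j)-1\ge\ln(\ub A(j))+d_i\ub A(j)$, which follows from $d_i\ub A(j)\le\lambda A(j)$, $\beta_i(j)\ge\ub$, and the scalar inequality $s-1\ge\ln s$ with $s=\ub A(j)$; finally the lower bound in \eqref{ki_bounds} follows by combining $A(j)e^{-\alpha_i k_i(j)}\le\tfrac1\ub$ with $A(j)>\tfrac1{\ob+\lambda}$.

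For the base case $j=N_0+m$, \eqref{Ak_terminal} gives $A(N_0+m)=\beta_\Sigma(N_0+m)^{-1}\in[\tfrac1\ob,\tfrac1\ub]$ and $\alpha_i k_i(N_0+m)=\tfrac{\beta_i(N_0+m)}{\beta_\Sigma(N_0+m)}-1\in[\tfrac\ub\ob-1,\tfrac\ob\ub-1]$, and \eqref{hypo-infinite}--\eqref{ki_bounds} at this level follow from Assumption~\ref{asm:beta-bdd} together with the scalar inequalities $\ln s\le s-1$ (used with $s=\tfrac\ub\ob$ for the lower bound of \eqref{ki_bounds}) and $x e^{1-\beta x}\le\tfrac1\beta$ (used for $A(N_0+m)e^{-\alpha_i k_i(N_0+m)}\le\tfrac1\ub$). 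I expect the only genuinely non-routine step to be spotting that, after eliminating the exponentials via \eqref{ki_original_infinite}, equation \eqref{Aj_original_infinite} reduces to the linear constraint $\sum_{i=1}^j k_i(j)=0$; everything downstream is a monotonicity-plus-intermediate-value argument decorated by the two elementary inequalities $s-1\ge\ln s$ and $x e^{1-\beta x}\le\tfrac1\beta$.
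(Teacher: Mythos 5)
Your proof is correct, and the overall strategy matches the paper's: backward induction from $j=N_0+m$, reduction of the coupled equations at each level to a monotone scalar equation for $A(j)$, solution by the intermediate value theorem, and propagation of the bounds \eqref{hypo-infinite}--\eqref{ki_bounds} down one level via the induction hypothesis $A(j+1)e^{-\alpha_i k_i(j+1)}\le \tfrac{1}{\ub}$. Where you diverge is in how the scalar equation is packaged. You substitute the identity $d_i e^{\alpha_i k_i(j)}=(\beta_i(j)+\lambda)A(j)-1-\alpha_i k_i(j)$ from \eqref{ki_original_infinite} into \eqref{Aj_original_infinite} and observe that the latter collapses to the clean constraint $\sum_{i=1}^j k_i(j)=0$; the paper only surfaces this relation later, at \eqref{ki_sum=0} in the market-clearing step of Theorem~\ref{thm:eq-infinite}, while at the existence stage it keeps the Lambert $W$ explicit, defines the function $f$ in \eqref{Aj_new}, and verifies $f'<0$ by differentiating $L$. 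In fact $f(x)=-\tfrac{\alpha_\Sigma(j)}{\beta_\Sigma(j)+\lambda}\phi_j(x)$, so the two scalar equations agree up to a negative constant; your reformulation simply makes strict monotonicity immediate (from $y\mapsto y+d_i e^y$ being an increasing bijection) and locates the root by evaluating $\phi_j$ at the two endpoints $\tfrac{1}{\ob+\lambda}$ and $\tfrac{1}{\ub}$, rather than via $f(0)>0$ together with a large-$x$ sign check as in \eqref{f<0}. The bound-propagation algebra and the base case are handled equivalently in both proofs, and your uniqueness argument (strict monotonicity of $\phi_j$ on $(0,\infty)$, then unique determination of each $k_i(j)$ from $A(j)$) matches the paper's.
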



\begin{theorem}\label{thm:finite-verification} Let $m\geq 0$, and suppose that $\{N_t\}_{t\geq 0}$ only jumps $m$ times so that
$$
  \tau_1 = \ldots = \tau_{N_0} = 0 < \tau_{N_0+1} < \ldots< \tau_{N_0+m} < \infty = \tau_{N_0+m+1} = \tau_{N_0+m+2} = \ldots
$$
Suppose that Assumptions \ref{asm:bdd} and \ref{asm:beta-bdd} hold. Let $\{A(j), k_i(j) \}_{N_0 \leq j \leq N_0+m, \, 1\leq i \leq j}$ be the unique solution given in Lemma~\ref{solutions1}.  Then, $\{ A(N_t)\}_{t\geq 0}$ is an equilibrium annuity price for an economy that grows by $m$ agents and the optimal  processes are as in \eqref{Xici} for $1\leq i\leq N_0+m$.
\end{theorem}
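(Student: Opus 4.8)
The plan is to verify the two conditions in the equilibrium definition for the candidate strategies $\{\theta_i,c_i\}_{1\le i\le N_0+m}$ and price $\{A(N_t)\}_{t\ge 0}$ of \eqref{Xici}, where $\{A(j),k_i(j)\}$ is the solution from Lemma~\ref{solutions1}. Market clearing is the algebraic step. Each agent $i>N_0$ is born with $\theta_{i,\tau_i}=0$, so $\sum_{i=1}^{N_t}\theta_{i,t}$ is continuous across births, and between them it evolves only through the drift $-\frac{1}{A(N_t)}\sum_{i=1}^{N_t}k_i(N_t)$. Summing \eqref{ki_original_infinite} over $1\le i\le j$ and substituting \eqref{Aj_original_infinite} — or, when $j=N_0+m$, substituting the terminal relations \eqref{Ak_terminal} together with $\sum_i\beta_i(j)/\alpha_i=\beta_\Sigma(j)/\alpha_\Sigma(j)$ and $A(N_0+m)=1/\beta_\Sigma(N_0+m)$ from \eqref{param-defs} — yields $\sum_{i=1}^j k_i(j)=0$ for every $N_0\le j\le N_0+m$. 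Hence $\sum_{i=1}^{N_t}\theta_{i,t}\equiv\sum_{i=1}^{N_0}\theta_{i,0}=1$, and then $\sum_{i=1}^{N_t}c_{i,t}=\sum_{i=1}^{N_t}\theta_{i,t}+\sum_{i=1}^{N_t}k_i(N_t)+\sum_{i=1}^{N_t}\eps_{i,t}=1+\sum_{i=1}^{N_t}\eps_{i,t}$, the consumption clearing condition.

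Before optimality I would check $(\theta_i,c_i)\in\sA_i$. Condition~\eqref{theta-growth} holds because $|k_i(j)/A(j)|$ ranges over the finite bounded set of indices $N_0,\dots,N_0+m$, so $\theta_{i,t}$ grows at most linearly, and $c_{i,t}$ is c\`adl\`ag and locally bounded, giving $\int_{\tau_i}^t|c_{i,s}|ds<\infty$. For \eqref{self-financing}, the product rule applied to $X_t=\theta_{i,t}A(N_t)$, using that $\theta_{i,t}$ is continuous of finite variation with $d\theta_{i,t}=-\tfrac{k_i(N_t)}{A(N_t)}dt$, gives $dX_t=\theta_{i,t-}dA(N_t)-k_i(N_t)dt$, which coincides with $\theta_{i,t-}(dA(N_t)+dt)+(\eps_{i,t}-c_{i,t})dt$ exactly because $c_{i,t}=\theta_{i,t}+k_i(N_t)+\eps_{i,t}$. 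For \eqref{transversality}, once all $m$ births have occurred — which happens a.s.\ in finite time — we have $N_t\equiv N_0+m$, so $\theta_{i,t}+\eps_{i,t}$ equals a bounded $\sF_{\tau_{N_0+m}}$-measurable term plus an arithmetic Brownian motion with drift $\mu_i(N_0+m)-k_i(N_0+m)/A(N_0+m)$ and volatility $\sigma_i(N_0+m)$. Computing the conditional Gaussian exponential moment, the exponential growth rate in $t$ of the expression in \eqref{transversality} equals $-\beta_i(N_0+m)+\alpha_ik_i(N_0+m)/A(N_0+m)$, which by \eqref{Ak_terminal} equals $-1/A(N_0+m)<0$; a Cauchy--Schwarz estimate on the event $\{\tau_{N_0+m}>t\}$ (using $\bP(\tau_{N_0+m}>t)\to 0$ and \eqref{hypo-infinite}) then shows the limit in \eqref{transversality} is $0$.

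Optimality is a martingale verification. Fix $i$, and for an arbitrary $(\theta,c)\in\sA_i$ with wealth $X$ set
\begin{equation*}
V_t:=-A(N_t)\exp\!\left(-\rho_i(t-\tau_i)-\alpha_i\Big(\tfrac{X_t}{A(N_t)}+k_i(N_t)+\eps_{i,t}\Big)\right),\qquad G_t:=\int_{\tau_i}^t-e^{-\rho_i(s-\tau_i)-\alpha_ic_s}ds+V_t.
\end{equation*}
Note first that \eqref{self-financing} forces $\theta_t=X_t/A(N_t)$ to be continuous across every birth time (the agent does not trade there). Applying It\^o's formula for semimartingales with jumps to $V_t$, where $A(N_t)$ and $k_i(N_t)$ jump but $\theta_t$ and $\eps_{i,t}$ do not, and writing $z_s:=c_s-\theta_s-k_i(N_s)-\eps_{i,s}$, the recursion \eqref{ki_original_infinite} (respectively the terminal relation \eqref{Ak_terminal} when $N_s=N_0+m$) produces the cancellation
\begin{equation*}
G_t=V_{\tau_i}+(\text{local martingale})+\int_{\tau_i}^te^{-\rho_i(s-\tau_i)-\alpha_i(\theta_s+k_i(N_s)+\eps_{i,s})}\big(1-e^{-\alpha_iz_s}-\alpha_iz_s\big)\,ds,
\end{equation*}
and the integrand is $\le 0$ with equality if and only if $z_s=0$, i.e.\ if and only if $c=c_i$ as in \eqref{Xici}. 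Thus $G$ is a local supermartingale for every admissible strategy and a local martingale for the candidate; since $G_t<0$, a conditional Fatou argument along a localizing sequence gives $\E_{\tau_i}[G_t]\le G_{\tau_i}=V_{\tau_i}$ for $t\ge\tau_i$, with equality for $(\theta_i,c_i)$. The transversality bound above together with the boundedness of $k_i(N_t)$ forces $\E_{\tau_i}[V_t]\to 0$, and monotone convergence for the decreasing running-reward integral then yields $\E_{\tau_i}\big[\int_{\tau_i}^\infty-e^{-\rho_i(s-\tau_i)-\alpha_ic_s}ds\big]\le V_{\tau_i}$ for all $(\theta,c)\in\sA_i$, with equality for $(\theta_i,c_i)$ — which is \eqref{optimization}.

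The main obstacle is the passage from the local (super)martingale property of $G$ to the genuine inequality $\E_{\tau_i}[G_t]\le G_{\tau_i}$ and to the $t\to\infty$ limit: one must control the integrability of $G$ and its martingale part along the localizing sequence, treat separately any strategy whose objective is $-\infty$ (for which optimality of the candidate is trivial), and make the transversality estimate uniform in the pre-absorption regime. Each of these steps uses that there are only $m$ births, the boundedness Assumptions~\ref{asm:bdd} and~\ref{asm:beta-bdd}, and the a priori bounds \eqref{hypo-infinite}--\eqref{ki_bounds} from Lemma~\ref{solutions1}, which confine all processes in the problem to the regime of standard Gaussian moment estimates.
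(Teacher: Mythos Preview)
Your overall scheme --- market clearing from $\sum_i k_i(j)=0$, admissibility checks, and a supermartingale verification built on the drift identity
\[
\mu_{G,t}=e^{-\rho_i(t-\tau_i)-\alpha_i(\theta_t+k_i(N_t)+\eps_{i,t})}\bigl(1-e^{-\alpha_iz_t}-\alpha_iz_t\bigr)\le 0
\]
--- is exactly what the paper does (the paper phrases the last inequality via Fenchel, but it is the same computation).

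There is one genuine gap. The claim ``since $G_t<0$, a conditional Fatou argument along a localizing sequence gives $\E_{\tau_i}[G_t]\le G_{\tau_i}$'' is not valid: a \emph{nonpositive} local supermartingale need not be a supermartingale --- Fatou's lemma gives the supermartingale property for \emph{nonnegative} local supermartingales, and here the sign points the wrong way. (Concretely, if $M$ is a strict nonnegative local martingale then $-M$ is a nonpositive local martingale, yet $t\mapsto\E[-M_t]$ is increasing.) You rightly flag this step as the ``main obstacle'', but the remedy is not Fatou: the paper instead shows that the Brownian and compensated-jump integrals are \emph{true} martingales by proving $\E_{\tau_i}\bigl[\int_{\tau_i}^t W_s^2\,ds\bigr]<\infty$ for every admissible $\theta$ (Lemma~\ref{lemma:mg}, using only \eqref{theta-growth}, Assumption~\ref{asm:bdd}, and the bounds \eqref{hypo-infinite}--\eqref{ki_bounds}) together with Lemma~\ref{lemma:mg1} for the jump part. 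Once the local-martingale terms are genuine martingales, $\E_{\tau_i}[G_t]=V_{\tau_i}+\E_{\tau_i}\bigl[\int_{\tau_i}^t\mu_{G,s}\,ds\bigr]\le V_{\tau_i}$ follows directly, with equality for the candidate.

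Your transversality argument for the candidate (explicit Gaussian computation after absorption plus Cauchy--Schwarz on $\{\tau_{N_0+m}>t\}$) is specific to the truncated model and the Cauchy--Schwarz piece is fragile: the second moment of the exponential can grow faster than $\bP(\tau_{N_0+m}>t)^{1/2}$ decays. The paper avoids the split entirely: from the same true-martingale property and \eqref{ki_original_infinite}/\eqref{Ak_terminal} one gets $d\E_{\tau_i}[-W_{i,t}]=-\E_{\tau_i}\bigl[\tfrac{-W_{i,t}}{A(N_t)}\bigr]dt$, hence $g(t):=\E_{\tau_i}[-W_{i,t}]$ satisfies $g(T+t)\le g(T)-\ub\int_T^{T+t}g(s)\,ds$, so $g$ is decreasing with $\int g<\infty$ and therefore $g(t)\to 0$. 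This works uniformly across the pre- and post-absorption regimes with no case distinction.
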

\begin{remark}[Non-constant birth rate $\lambda$]
  Our birth rate intensity is assumed to be a constant $\lambda>0$.  However, we could allow for $\lambda$ to be a function of the number of agents.  That is, the intensity of $N$ at time $t$ is given by $\lambda(N_t)$. We would need the boundedness of $\{ \lambda(j)\}_{j\geq N_0}$ for the same type of result. Lemma~\ref{solutions1} and Theorem~\ref{thm:finite-verification} can be seen as the special case that $\lambda(j)=\lambda$ for $N_0\leq j \leq N_0+m-1$ and $\lambda(j)=0$ for $j\geq N_0+m$. 
\end{remark}

\section{Proofs}\label{section:proofs}
\begin{proof}[Proof of Lemma \ref{solutions1}]
Let $L:[-\frac{1}{e},\infty) \to [-1,\infty)$ be the inverse function of the map $w\mapsto w\, e^w$ ($L$ is called the {\it Lambert W function}). In other words, $w=L(z)$ if and only if $z=w\, e^w$. Since $L$ is strictly increasing on $[0,\infty)$,
\begin{align}
  &x>c>0 \quad \Longrightarrow \quad L(c \,e^x) < L(x \,e^x) = x, \label{f_1}
\end{align}
and observe that
\begin{align}
& x=c_1 + c_2 \, e^{c_3 x} \quad \Longrightarrow \quad x=c_1 - \tfrac{L(-c_2 c_3 \, e^{c_1 c_3})}{c_3}  \quad \textrm{for}\quad c_2\leq 0 \leq c_3.  \label{f_2}
\end{align}

We prove the lemma by backward induction on $j$. Since \eqref{ki_original_infinite} and \eqref{hypo-infinite} imply \eqref{ki_bounds}, we focus on the existence of the solution satisfying \eqref{hypo-infinite}.

\smallskip

\noindent {\bf Base case:} Let $j=N_0+m$. For $1\leq i \leq j$, \eqref{Ak_terminal} produces
\begin{align}
A(j) e^{-\alpha_i k_i(j)} = A(j) e^{-\beta_i(j) A(j) +1 } \leq  \tfrac{1}{\beta_i(j) },
\end{align}
where the inequality is due to $\max_{x\in \R} x\, e^{-b x +1} = \frac{1}{b}$ for $b>0$. Hence, for $j=N_0+m$, \eqref{hypo-infinite} holds.

\smallskip

\noindent {\bf Induction Hyphothesis:} Let $N_0\leq l<N_0+m$. Suppose that the system has a unique solution $\{A(j), k_i(j) \}_{l+1 \leq j \leq N_0+m,\, 1\leq i \leq j}$ that satisfies \eqref{hypo-infinite} for $l+1\leq j \leq N_0+m$. 

\smallskip

\noindent {\bf Induction Step:}  
Let $1\leq i\leq l$. Using \eqref{f_2}, we rewrite \eqref{ki_original_infinite} as
\begin{align}
 k_i(l) &=    \frac{(\beta_i(l)+\lambda) A(l) -1}{\alpha_i}  - \frac{L\left(\lambda A(l+1) e^{-\alpha_i k_i(l+1)+ (\beta_i(l)+\lambda)A(l)-1}\right)}{\alpha_i}. \label{ki_new}
\end{align}
Due to \eqref{ki_original_infinite} and \eqref{ki_new},
\begin{align}
\lambda A(l+1) e^{-\alpha_i (k_i(l+1)-k_i(j))} = L\left(\lambda A(l+1) e^{-\alpha_i k_i(l+1)+ (\beta_i(l)+\lambda)A(l)-1}\right). \label{ki_new2}
\end{align}
We apply \eqref{ki_new2} to rewrite \eqref{Aj_original_infinite} as
\begin{align}
    A(l) &= \tfrac{1}{\beta_{\Sigma}(l)+\lambda}  \Big(1+\alpha_{\Sigma}(l)    \sum_{i=1}^l \tfrac{L\left(  \lambda A(l+1) e^{-\alpha_i k_i(l+1)+ (\beta_i(l)+\lambda) A(l)-1}\right)}{\alpha_i}\Big). \label{Aj_new}
\end{align}
Our goal is to show that there exists $A(l)$ satisfying \eqref{Aj_new} and the inequalities in \eqref{hypo-infinite}. Let $f:[0,\infty) \to \R$ be defined as
\begin{align}
f(x):=\tfrac{1}{\beta_{\Sigma}(l)+\lambda} \Big(1+\alpha_{\Sigma}(l) \sum_{i=1}^l \tfrac{L\left(  \lambda A(l+1) e^{-\alpha_i k_i(l+1)+ (\beta_i(l)+\lambda) x -1}\right)}{\alpha_i}\Big) - x.
\end{align}
Direct computations produce
\begin{align}
f'(x) &= \tfrac{\alpha_{\Sigma}(l)}{\beta_{\Sigma}(l)+\lambda}  \sum_{i=1}^l \tfrac{\beta_i(l)+\lambda}{\alpha_i}\cdot \tfrac{L\left(  \lambda A(l+1) e^{-\alpha_i k_i(l+1)+ (\beta_i(l)+\lambda) x -1}\right)}{1+L\left(  \lambda A(l+1) e^{-\alpha_i k_i(l+1)+ (\beta_i(l)+\lambda) x -1}\right)} - 1 \nonumber\\
&<\tfrac{\alpha_{\Sigma}(l)}{\beta_{\Sigma}(l)+\lambda}  \sum_{i=1}^l \tfrac{\beta_i(l)+\lambda}{\alpha_i} - 1 =0. \label{f_derivative}
\end{align}
Due to \eqref{f_1}, if $(\beta_i(l)+\lambda) x -1> \lambda A(l+1) e^{-\alpha_i k_i(l+1)}$, then
\begin{align}
L\left(  \lambda A(l+1) e^{-\alpha_i k_i(l+1)+ (\beta_i(l)+\lambda) x -1}\right)<(\beta_i(l)+\lambda) x -1. \label{L_ineq1}
\end{align}
Therefore, for $x> \max_{1\leq i \leq l} \frac{\lambda A(l+1) e^{-\alpha_i k_i(l+1)} + 1}{\beta_i(l)+\lambda}$, \eqref{L_ineq1} implies
\begin{align}
f(x)<\tfrac{1}{\beta_{\Sigma}(l)+\lambda} \Big(1+\alpha_{\Sigma}(l) \sum_{i=1}^l \tfrac{(\beta_i(l)+\lambda) x -1}{\alpha_i}\Big) - x = 0. \label{f<0}
\end{align}
We also observe that $f$ is strictly concave and $f(0)>\frac{1}{\beta_{\Sigma}(l)+\lambda}$. Therefore, due to the inequalities in \eqref{f_derivative} and \eqref{f<0}, we conclude that there exists a unique solution of $f(x)=0$ for $x\in (0,\infty)$, and the solution, call it $A(l)$, satisfies
\begin{align}
\tfrac{1}{\beta_{\Sigma}(l)+\lambda}< A(l) \leq  \max_{1\leq i \leq l} \tfrac{\lambda A(l+1) e^{-\alpha_i k_i(l+1)} + 1}{\beta_i(l)+\lambda}.
\end{align}
The above inequalities and $A(l+1) e^{-\alpha_i k_i(l+1)} \leq \frac{1}{\ub}$ from the induction hypothesis imply $\frac{1}{\ob + \lambda} < A(l) \leq \frac{1}{\ub}$. We rewrite \eqref{ki_original_infinite} to obtain
\begin{align*}
A(l) e^{-\alpha_i k_i(l)} &= \tfrac{(1+\alpha_i k_i(l)) e^{-\alpha_i k_i(l)} +\lambda A(l+1) e^{-\alpha_i k_i(l+1)} }{\beta_i(l)+\lambda} \leq \frac{1}{\ub},
\end{align*}
where the inequality is due to $\max_{x\in \R} (1+x) e^{-x} =1$ and $A(l+1) e^{-\alpha_i k_i(l+1)} \leq \frac{1}{\ub}$ from the induction hypothesis. We conclude that the inequalities in \eqref{hypo-infinite} hold for $j=l$. 
\end{proof}

Using Lemma~\ref{solutions1}, we prove Theorem~\ref{thm:existence-infinite}, which shows the existence of a solution to the infinite recursive equation system \eqref{Aj_original_infinite}--\eqref{ki_original_infinite} satisfying \eqref{hypo-infinite}--\eqref{ki_bounds}. 

\begin{proof}[Proof of Theorem~\ref{thm:existence-infinite}]
The proof is based on a diagonalization argument.
For $m\in \mathbb{N}$, Lemma~\ref{solutions1} implies that there exists a solution $\{A^{(m)}(j), k_i^{(m)}(j) \}_{N_0 \leq j< N_0+m, \, 1\leq i \leq j}$ satisfying \eqref{Aj_original_infinite}, \eqref{ki_original_infinite}, \eqref{hypo-infinite} and \eqref{ki_bounds}.
Due to the uniform boundedness in \eqref{hypo-infinite} and \eqref{ki_bounds}, there is a sequence $\{ m_{N_0}(l) \}_{l\in \mathbb{N}} $ such that the following limits exist for $1\leq i\leq N_0$:
\begin{align*}
A(N_0)&:=\lim_{l\to \infty} A^{(m_{N_0}(l))}(N_0),\quad k_i(N_0):= \lim_{l\to \infty}  k_i^{(m_{N_0}(l))}(N_0).
\end{align*}
Again due to \eqref{hypo-infinite} and \eqref{ki_bounds}, there is a subsequence $\{m_{N_0+1}(l)\}_{l\in \mathbb{N}}\subset \{ m_{N_0}(l) \}_{l\in \mathbb{N}} $ such that the following limits exist for $1\leq i\leq N_0+1$:
\begin{align*}
A(N_0+1)&:=\lim_{l\to \infty} A^{(m_{N_0+1}(l))}(N_0+1),\quad k_i(N_0+1):= \lim_{l\to \infty}  k_i^{(m_{N_0+1}(l))}(N_0+1).
\end{align*}
We repeat the same argument to obtain a chain of subsequences 
\begin{align*}
\{ m_{N_0}(l) \}_{l\in \mathbb{N}} \supset \{ m_{N_0+1}(l) \}_{l\in \mathbb{N}} \supset \{ m_{N_0+2}(l) \}_{l\in \mathbb{N}} \supset \cdots
\end{align*}
such that for $j\geq N_0$ and $1\leq i \leq j$, the following limits exist:
\begin{align*}
A(j)&:=\lim_{l\to \infty} A^{(m_j(l))}(j),\quad k_i(j):= \lim_{l\to \infty}  k_i^{(m_j(l))}(j).
\end{align*}
We form the diagonal subsequence $\hat{m}(l):=m_l(l)$ for $l \geq N_0$. Then by construction, \begin{align}
A(j)=\lim_{l \to \infty} A^{(\hat{m}(l))}(j), \quad k_i(j)= \lim_{l\to \infty}  k_i^{(\hat{m}(l))}(j)\quad \textrm{for}\quad j\geq N_0, \,\, 1\leq i \leq j. \label{limit_def}
\end{align} 
Since $A^{(\hat{m}(l))}$ and $k_i^{(\hat{m}(l))}$ satisfy \eqref{Aj_original_infinite}--\eqref{ki_bounds}, the limits $A$ and $k_i$ in \eqref{limit_def} do as well.
\end{proof}

Before proving Theorem \ref{thm:eq-infinite}, we prove two helper results.
\begin{lemma}\label{lemma:mg1}
  Let $i \in \mathbb{N}$, $\Delta$ be predictable and $\E_{\tau_i}\left[\int_{\tau_i}^t |\Delta_s|ds\right] < \infty$ on $\{t\geq \tau_i\}$. Then,
\begin{align}
    \E_{\tau_i}\left[\int_{\tau_i}^t \Delta_s (dN_s-\lambda ds)\right] = 0, \quad \textrm{on} \quad  \{t\geq \tau_i\}.   \label{jump_integral_zero}
\end{align}
\end{lemma}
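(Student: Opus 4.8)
The plan is to recognize $\int_{\tau_i}^t\Delta_s(dN_s-\lambda ds)$, on $\{t\ge\tau_i\}$, as a stopped stochastic integral against the compensated Poisson process $\tilde N_t:=N_t-\lambda t$, which is an $\{\sF_t\}$-martingale since $B$ and $N$ are independent, and then to invoke optional sampling. On $\{t\ge\tau_i\}$ one has $\int_{\tau_i}^t\Delta_s(dN_s-\lambda ds)=\int_0^t\Delta_s\mathbf 1_{\{s>\tau_i\}}(dN_s-\lambda ds)$, and since $\tau_i$ is a stopping time the process $s\mapsto\mathbf 1_{\{s>\tau_i\}}$ is adapted and left-continuous, hence predictable; thus $\Psi^0:=\Delta\,\mathbf 1_{\{\cdot>\tau_i\}}$ is a predictable process that vanishes for $s\le\tau_i$. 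I will use the standard fact that for a predictable $\Psi$ with $\E\int_0^t|\Psi_s|\,ds<\infty$ the process $u\mapsto\int_0^u\Psi_s\,d\tilde N_s$ is a martingale on $[0,t]$: one checks this for elementary integrands $\Psi=H\,\mathbf 1_{(r,u]}$ with $H$ bounded and $\sF_r$-measurable directly from the martingale property of $\tilde N$, extends to all bounded predictable $\Psi$ by the monotone class theorem, and then to $L^1$ integrands using $\E\big|\int_0^t\Psi_s\,d\tilde N_s\big|\le\E\int_0^t|\Psi_s|\,dN_s+\lambda\E\int_0^t|\Psi_s|\,ds=2\lambda\,\E\int_0^t|\Psi_s|\,ds$.

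The one real subtlety is that the hypothesis only provides the \emph{conditional} bound $G:=\E_{\tau_i}\!\big[\int_{\tau_i}^t|\Delta_s|\,ds\big]<\infty$ a.s.\ on $\{\tau_i\le t\}$, not an unconditional one, and that $\tau_i$ need not be bounded. I would handle this by localizing on the $\sF_{\tau_i}$-measurable sets $\{G\le c\}$, $c>0$. Set $\sigma_c:=\tau_i$ on $\{G\le c\}$ and $\sigma_c:=\infty$ otherwise; this is a stopping time because $\{G\le c\}\in\sF_{\tau_i}$, and $\Psi^{(c)}:=\Delta\,\mathbf 1_{\{\cdot>\sigma_c\}}$ is predictable, vanishes for $s\le\tau_i$, and satisfies $\E\int_0^t|\Psi^{(c)}_s|\,ds=\E\big[\mathbf 1_{\{G\le c,\,\tau_i<t\}}\int_{\tau_i}^t|\Delta_s|\,ds\big]\le\E[\mathbf 1_{\{G\le c\}}G]\le c<\infty$. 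Hence $Y^{(c)}_u:=\int_0^u\Psi^{(c)}_s\,d\tilde N_s$ is a martingale on $[0,t]$ with $Y^{(c)}_{\tau_i}=0$.

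Next I would apply optional sampling at $\tau_i\wedge t$, which is a bounded stopping time because $t$ is deterministic: $\E[\,Y^{(c)}_t\mid\sF_{\tau_i\wedge t}]=Y^{(c)}_{\tau_i\wedge t}$. For any $F\in\sF_{\tau_i}$ one has $F\cap\{\tau_i\le t\}\in\sF_{\tau_i\wedge t}$, and $Y^{(c)}_{\tau_i\wedge t}=Y^{(c)}_{\tau_i}=0$ on $\{\tau_i\le t\}$, so $\E[\mathbf 1_{F\cap\{\tau_i\le t\}}\,Y^{(c)}_t]=0$. On $\{\tau_i\le t\}\cap\{G\le c\}$ we have $\Psi^{(c)}=\Delta\,\mathbf 1_{\{\cdot>\tau_i\}}$, hence $Y^{(c)}_t=\int_{\tau_i}^t\Delta_s(dN_s-\lambda ds)$ there, while $\mathbf 1_{\{\tau_i\le t\}}Y^{(c)}_t=0$ off $\{G\le c\}$; therefore
\[
\E\Big[\mathbf 1_F\,\mathbf 1_{\{\tau_i\le t,\,G\le c\}}\!\int_{\tau_i}^t\!\Delta_s(dN_s-\lambda ds)\Big]=0\qquad\text{for every }F\in\sF_{\tau_i}.
\]
Since the bracketed variable equals $\mathbf 1_{\{\tau_i\le t\}}Y^{(c)}_t\in L^1$ and $F\in\sF_{\tau_i}$ is arbitrary, its $\sF_{\tau_i}$-conditional expectation is $0$; letting $c\uparrow\infty$ and using $\{\tau_i\le t\}\subseteq\{G<\infty\}=\bigcup_{c}\{G\le c\}$ a.s.\ yields \eqref{jump_integral_zero}.

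The main obstacle, as indicated, is precisely the mismatch between the merely conditional integrability assumed and the unconditional integrability required by the stochastic-integral martingale theory; the localization on the $\sF_{\tau_i}$-sets $\{G\le c\}$, combined with the observation that $\tau_i$ is bounded on the event $\{\tau_i\le t\}$ of interest (so optional sampling applies with no extra hypotheses), is the crux, and everything else is routine bookkeeping.
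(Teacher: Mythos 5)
Your proof is correct but takes a genuinely different route from the paper's. The paper first reduces to $\Delta\geq 0$ via $\Delta=\Delta^+-\Delta^-$, observes that $\int_{\tau_i}^{\cdot}\Delta_s(dN_s-\lambda\,ds)$ is a local martingale, chooses a localizing sequence $\{\eta_l\}$ of stopping times, writes $\E_{\tau_i}\big[\int_{\tau_i}^{t\wedge\eta_l}\Delta_s\,dN_s\big]=\E_{\tau_i}\big[\int_{\tau_i}^{t\wedge\eta_l}\Delta_s\lambda\,ds\big]$, and passes $l\to\infty$ by conditional monotone convergence, with the hypothesis guaranteeing that both limits are finite. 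You instead localize ``in $\omega$'' on the $\sF_{\tau_i}$-measurable sets $\{G\leq c\}$ to upgrade the conditional $L^1$-bound into an unconditional one, obtain a genuine martingale $Y^{(c)}$, apply optional sampling at $\tau_i\wedge t$, and then let $c\uparrow\infty$. Both are valid; the paper's version is shorter because conditional monotone convergence applied to the nonnegative parts dissolves the conditional-versus-unconditional tension without ever passing through an unconditional expectation, and the stopping-time localization is the standard one built into the local-martingale framework. Your version is more explicit about the exact subtlety: that the hypothesis only controls expectations conditioned on $\sF_{\tau_i}$, which is why you compare $Y^{(c)}_t$ to $Y^{(c)}_{\tau_i\wedge t}$ rather than to $Y^{(c)}_0$, and why the truncation sets must be $\sF_{\tau_i}$-measurable. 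The trade-off is a horizontal truncation on events versus a vertical truncation in time; either resolves the integrability mismatch.
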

\begin{proof}
  By considering $\Delta = \Delta^+ - \Delta^-$, we can without loss of generality consider the case when $\Delta$ is nonnegative.  Since $\Delta$ is predictable, $\int_{\tau_i}^\cdot \Delta_s (dN_s-\lambda\, ds)$ is a local martingale on $[\tau_i, t]$.  Let $\{\eta_l\}_{l \in \mathbb{N}}$ be a localizing sequence.  For each $l \in \mathbb{N}$,
  \begin{align*}
    \E_{\tau_i}\left[\int_{\tau_i}^{t\wedge\eta_l} \Delta_s dN_s\right]
    = \E_{\tau_i}\left[\int_{\tau_i}^{t\wedge\eta_l} \Delta_s \lambda\,ds\right]  \quad \textrm{on} \quad  \{t\geq \tau_i\}.  
  \end{align*}
Since $\E_{\tau_i}\left[\int_{\tau_i}^t \Delta_s ds\right] < \infty$ on $\{t\geq \tau_i\}$, by the monotone convergence theorem, both sides of the above equation converge to a finite number as $l \to \infty$, and we conclude \eqref{jump_integral_zero}.
\end{proof}

\begin{lemma}\label{lemma:mg}
For $i \in \mathbb{N}$, let $(\theta, c)$ satisfy \eqref{theta-growth} and $\{W_t\}_{t\geq \tau_i}$ be defined as
\begin{align}
  \begin{split}\label{def:W}
  W_{t}&:=-A(N_t)\exp\left(-\rho_i (t-\tau_i)-\alpha_i\left(\theta_t+k_i(N_t) + \eps_{i,t}\right)\right).
  \end{split}
\end{align}
Then, $\E_{\tau_i}\left[\int_{\tau_i}^t W_{s}^2 ds\right]<\infty$ on $\{t\geq \tau_i\}$.
\end{lemma}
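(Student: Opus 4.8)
The plan is to show that $W$ defined in \eqref{def:W} has $\E_{\tau_i}[\int_{\tau_i}^t W_s^2\, ds] < \infty$ by establishing a deterministic, exponentially growing bound on $W_s^2$ on the event $\{s \geq \tau_i\}$ that depends only on the time elapsed, and then integrating it. The key observation is that $W_s^2$ factors into three controllable pieces: a bounded annuity-price factor $A(N_s)^2 \leq \ub^{-2}$ by \eqref{hypo-infinite}; an exponential discount factor $e^{-2\rho_i(s-\tau_i)}$; and the factor $\exp(-2\alpha_i(\theta_s + k_i(N_s) + \eps_{i,s}))$, which is the only genuinely stochastic term. So everything reduces to bounding the exponential moments of $\theta_s$, $k_i(N_s)$, and $\eps_{i,s}$.

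First I would handle $k_i(N_s)$: by \eqref{ki_bounds} we have $|\alpha_i k_i(j)| \leq \max\{\ln((\ob+\lambda)/\ub),\ (\ob+\lambda)/\ub - 1\}$ uniformly in $j$, so $\exp(-2\alpha_i k_i(N_s))$ is bounded by a deterministic constant. Next, for $\theta_s$: the admissibility-type growth condition \eqref{theta-growth} gives $|\theta_s| \leq M(1+s)$, hence $\exp(-2\alpha_i\theta_s) \leq \exp(2\alpha_i M(1+s))$, a deterministic function of $s$. Finally, for the income term $\eps_{i,s}$: since $d\eps_{i,t} = \mu_i(N_t)\,dt + \sigma_i(N_t)\,dB_t$ with $\mu_i, \sigma_i$ bounded by Assumption~\ref{asm:bdd}, we can write $\eps_{i,s} = \eps_{i,\tau_i} + \int_{\tau_i}^s \mu_i(N_u)\,du + \int_{\tau_i}^s \sigma_i(N_u)\,dB_u$, so $-2\alpha_i\eps_{i,s}$ is bounded above by a deterministic linear-in-$s$ term plus $-2\alpha_i\int_{\tau_i}^s \sigma_i(N_u)\,dB_u$. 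Taking conditional expectation and using that $\int_{\tau_i}^s \sigma_i(N_u)\,dB_u$ is conditionally Gaussian-dominated (its conditional exponential moment $\E_{\tau_i}[\exp(-2\alpha_i\int_{\tau_i}^s \sigma_i(N_u)\,dB_u)]$ is bounded by $\exp(2\alpha_i^2 \|\sigma_i\|_\infty^2 (s-\tau_i))$ via the supermartingale property of the stochastic exponential), I obtain a deterministic bound on $\E_{\tau_i}[\exp(-2\alpha_i\eps_{i,s})]$ that grows at most exponentially in $s$.

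Combining these, $\E_{\tau_i}[W_s^2] \leq C_1 e^{C_2 s}$ for deterministic constants $C_1, C_2$ (depending on $i$, the parameter bounds, and $\eps_{i,\tau_i}$, which is $\sF_{\tau_i}$-measurable), on $\{s \geq \tau_i\}$. Then by Tonelli's theorem $\E_{\tau_i}[\int_{\tau_i}^t W_s^2\, ds] = \int_{\tau_i}^t \E_{\tau_i}[W_s^2]\, ds \leq \int_{\tau_i}^t C_1 e^{C_2 s}\, ds < \infty$ on $\{t \geq \tau_i\}$, which is the claim.

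The main obstacle is the careful bookkeeping of the stochastic income integral: one must be slightly careful that the conditioning is on $\sF_{\tau_i}$ rather than $\sF_0$, and that the stochastic exponential $\mathcal{E}(-2\alpha_i \int_{\tau_i}^\cdot \sigma_i(N_u)\,dB_u)$ is a genuine (super)martingale on $[\tau_i, t]$ so that its expectation is $\leq 1$; boundedness of $\sigma_i$ makes this routine (Novikov's condition holds trivially), but it is the one place where an inequality could be mishandled. Everything else is a matter of collecting deterministic bounds and invoking Tonelli.
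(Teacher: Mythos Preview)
Your proposal is correct and follows essentially the same route as the paper: factor $W_s^2$ into the bounded piece $A(N_s)^2\leq\ub^{-2}$, the deterministically controlled pieces coming from $\theta_s$, $k_i(N_s)$, $\rho_i$, and $\mu_i$ via \eqref{theta-growth}, \eqref{ki_bounds}, and Assumption~\ref{asm:bdd}, and the Dol\'eans--Dade exponential $\mathcal{E}\big(-2\alpha_i\int_{\tau_i}^\cdot\sigma_i(N_u)\,dB_u\big)$ whose $\sF_{\tau_i}$-conditional expectation is at most $1$; then integrate the resulting bound in $s$ using Tonelli. The paper packages the stochastic exponential explicitly from the outset, while you arrive at it after splitting $\eps_{i,s}$, but the argument is the same.
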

\begin{proof}
On $\{s\geq \tau_i\}$, we write $W_s^2$ in terms of the Doleans-Dade stochastic exponential $\mathcal{E}(\cdot)$,
\begin{align*}
  W_s^2 &= A(N_s)^2 e^{-2\alpha_i\left(\theta_t+k_i(N_s)+\eps_{i,\tau_i}\right) - 2\int_{\tau_i}^s\left( \rho_i + \alpha_i\mu_i(N_u) - \alpha_i^2\sigma_i^2(N_u)\right)du} \\
 &  \qquad \cdot\mathcal{E}\Big(-2\alpha_i\int_{\tau_i}^\cdot \sigma_i(N_u)dB_u\Big)_s\\
& \leq \tfrac{1}{\ub^2}e^{M'(1+s)-2\alpha_i \eps_{i,\tau_i} }\cdot \mathcal{E}\Big(-2\alpha_i\int_{\tau_i}^\cdot \sigma_i(N_u)dB_u\Big)_s
\end{align*}
for a constant $M'$ independent of $s$, where the inequality is due to \eqref{hypo-infinite}, \eqref{ki_bounds}, \eqref{theta-growth} and Assumption \ref{asm:bdd}.
Using the above inequality and $\eps_{i,\tau_i}\in \sF_{\tau_i}$, we conclude
\begin{align*}
  \E_{\tau_i}\left[\int_{\tau_i}^t W_s^2 ds\right]
  \leq \tfrac{1}{\ub^2}e^{-2\alpha_i \eps_{i,\tau_i}} \int_{\tau_i}^t e^{M' (1+s)}ds < \infty.
\end{align*}
\end{proof}

Finally, we prove Theorem~\ref{thm:eq-infinite}.  We omit the proof of Theorem~\ref{thm:finite-verification} because it is nearly identical to the proof of Theorem~\ref{thm:eq-infinite} with some minor modifications.
\begin{proof}[Proof of Theorem \ref{thm:eq-infinite}]
Let $\{A(j), k_i(j) \}_{j\geq N_0, 1\leq i \leq j}$ be a solution to \eqref{Aj_original_infinite}--\eqref{ki_original_infinite} satisfying \eqref{hypo-infinite}--\eqref{ki_bounds} by Theorem~\ref{thm:existence-infinite}. Recall that $\{ \theta_{i,t},c_{i,t} \}_{t\geq \tau_i}$ is defined in \eqref{Xici}.

\smallskip

\noindent{\bf Admissibility of $(\theta_i, c_i)$.} First, we show that $(\theta_i,c_i)\in \mathcal{A}_i$ for agent $i\in \mathbb{N}$. The expressions in \eqref{Xici} imply that $\theta_i$ and $c_i$ are progressively measurable. The inequalities in \eqref{hypo-infinite} and \eqref{ki_bounds} produce $\int_{\tau_i}^t |c_{i,s}|ds <\infty$ on $\{t\geq\tau_i\}$ and the bound \eqref{theta-growth} for $\theta_i$.
An application of It\^o's lemma using \eqref{Xici} produces \eqref{self-financing}: for $X_{i,t}:=\theta_{i,t} A(N_t)$,
 \begin{align*}
   dX_{i,t} &= -k_i(N_t)dt + \theta_{i,t-} dA(N_t)= \theta_{i,t-}\left(dA(N_t)+dt\right) + (\eps_{i,t}-c_{i,t})dt.
 \end{align*}

 It remains to check \eqref{transversality} for $\theta_i$. Let $\{W_{i,t}\}_{t\geq \tau_i}$ be as in \eqref{def:W} with $\theta$ replaced by $\theta_i$. Since $\{\alpha_ik_i(j)\}_{j\geq N_0,\, 1\leq i \leq j}$ is uniformly bounded by \eqref{ki_bounds}, to show \eqref{transversality}, it suffices to show that $ \lim_{t\rightarrow\infty} \E_{\tau_i}\left[ W_{i,t}\right] =0$. 
On $\{t\geq \tau_i\}$, It\^o's lemma produces
\begin{align}
  d W_{i,t} 
    &= W_{i,t}\left(\tfrac{\alpha_i k_i(N_t)}{A(N_t)}-\beta_i(N_t)\right)dt
     -  W_{i,t} \alpha_i \sigma_i(N_t) dB_t   \nonumber \\
    &\quad + W_{i,t-}\left(\tfrac{A(N_{t-}+1)}{A(N_{t-})}e^{-\alpha_i(k_i(N_{t-}+1)-k_i(N_{t-}))}-1\right) dN_t  \nonumber\\
    &=-\tfrac{W_{i,t}}{A(N_{t})} dt  - W_{i,t} \alpha_i \sigma_i(N_{t})dB_t  \nonumber \\
&\quad +  W_{i,t-} \left(\tfrac{A(N_{t-}+1)}{A(N_{t-})}e^{-\alpha_i(k_i(N_{t-}+1)-k_i(N_{t-}))}-1\right)(dN_t-\lambda dt), \label{dW}
\end{align}
 where the second equality is due to \eqref{ki_original_infinite}. We observe that \eqref{hypo-infinite} and \eqref{ki_bounds} imply
\begin{align}\label{Ak_bound}
0<\tfrac{A(j+1)e^{-\alpha_i k_i(j+1)}}{A(j)e^{-\alpha_i k_i(j)}} \leq \tfrac{\ob+\lambda}{\ub} \, e^{\frac{\ob + \lambda}{\ub}-1} \quad \textrm{for}\quad j\geq N_0, \,\, 1\leq i \leq j.
\end{align}
Assumption~\ref{asm:bdd} and \eqref{Ak_bound} enable us to apply Lemmas \ref{lemma:mg1} and \ref{lemma:mg} to \eqref{dW} and obtain
$$
\E_{\tau_i}\left[-W_{i,T+t} \right] = \E_{\tau_i}\left[-W_{i,T}\right] - \int_{T}^{T+t} \E_{\tau_i}\left[\frac{-W_{i,s}}{A(N_{s})} \right] ds \quad \text{for}\quad  t\geq 0  \,\,\text{ on } \,\, \{T\geq \tau_i\}.
$$
On $\{T\geq \tau_i\}$, we define $g(T):= \E_{\tau_i}\left[-W_{i,T}\right]$.  Then for $t\geq 0$ on $\{T\geq \tau_i\}$, \eqref{hypo-infinite} implies that
\begin{align}\label{bound1}
  g(T+t) \leq g(T) -  \ub  \int_T^{T+t} g(s) ds.
\end{align}
The nonnegativity of $g$ and \eqref{bound1} imply that $g$ is decreasing and $\int_{\tau_i}^\infty g(s)ds <\infty$.  Therefore, $\lim_{t\rightarrow\infty} g(t) = 0$, and $ \lim_{t\rightarrow\infty} \E_{\tau_i}\left[ W_{i,t}\right] =0$ is proven.
 
\smallskip

\noindent{\bf Optimality.} For agent $i\in \mathbb{N}$, let $(\theta,c)\in \mathcal{A}_i$ and $W_t$ be defined as in \eqref{def:W}. We define $\{V_t\}_{t\geq \tau_i}$ as
$$
  V_{t}:=W_t - \int_{\tau_i}^t e^{-\rho_i(u-\tau_i) - \alpha_i c_{u}} du \quad \textrm{on} \quad \{t\geq \tau_i\},
$$
Let $\{V_{i,t}\}_{t\geq \tau_i}$ be as above in terms of $\{ \theta_{i,t}, c_{i,t}\}_{t\geq \tau_i}$.
It\^o's lemma produces
\begin{align}
&dV_t = \mu_{V,t} dt + \sigma_{V,t}dB_t + \phi_{V,t-}(dN_t -\lambda dt),\nonumber\\
&\textrm{with}  \begin{cases}
  \mu_{V,t} =-e^{-\rho_i(t-\tau_i)-\alpha_ic_t} -W_t \Big(\beta_i + \frac{\alpha_i}{A(N_t)}\big(\theta_t+\eps_{i,t}-c_t\big) \\
  \quad\qquad\qquad\qquad\qquad\qquad -\lambda\big(\frac{A(N_t+1)}{A(N_t)}e^{-\alpha_i(k_i(N_t+1)-k_i(N_t))}-1\big)\Big),  \\
\sigma_{V,t} =- \alpha_i\sigma_i(N_t)W_t, \\
   \phi_{V,t} = W_t\left(\frac{A(N_t+1)}{A(N_t)}e^{-\alpha_i(k_i(N_t+1)-k_i(N_t))}-1\right). 
\end{cases}
\label{mu_V}
\end{align}
Assumption~\ref{asm:bdd} and \eqref{Ak_bound} enable us to apply Lemmas \ref{lemma:mg1} and \ref{lemma:mg} and obtain
\begin{align}\label{martingale2}
  \E_{\tau_i}\left[\int_{\tau_i}^t \sigma_{V,s} dB_s + \int_{\tau_i}^t \phi_{V,s-} (dN_s-ds)\right]=0 \quad \text{on}\quad \{t\geq\tau_i\}.
\end{align}

Fenchel's inequality tells us that for all $x\in\R$, we have
$$
  -e^{-\alpha_ix}\leq y\left(\ln y - 1\right)+\alpha_i xy \quad \text{for}\quad y>0.
$$
Taking $x=c_t$ and $y= -e^{\rho_i(t-\tau_i)} W_t / A(N_t)$ gives us
\begin{align}\label{fenchel}
  -e^{-\alpha_ic_t}\leq  -e^{\rho_i(t-\tau_i)} \tfrac{W_t}{A(N_t)} \big( -\alpha_i\left(\theta_t + k_i(N_t)+\epsilon_{i,t} \right) - 1 + \alpha_i c_t \big),
\end{align}
where the inequality becomes an equality if $c_t = c_{i,t}$. Combining \eqref{mu_V} with \eqref{fenchel} and \eqref{ki_original_infinite} yields
\begin{align}
 \mu_{V,t} 
 \leq 
 \tfrac{W_t}{A(j)}   \Big( \alpha_i k_i(j) -    (\beta_i(j)+\lambda) A(j)+1  + \lambda A(j+1) e^{-\alpha_i (k_i(j+1)-k_i(j))}\Big)\Big|_{j=N_t} = 0,   \label{mu_V_ineq}
\end{align}
where the inequality becomes an equality if $c_t = c_{i,t}$. By \eqref{mu_V_ineq} and \eqref{martingale2}, we obtain
\begin{align}
\E_{\tau_i}[V_{t}] \leq \E_{\tau_i}[V_{i,t}] \quad \text{on}\quad \{t\geq\tau_i\}. \label{V<V_i}
\end{align}
For any $(\theta, c)\in\sA_i$, the monotone convergence theorem, \eqref{transversality} and  \eqref{V<V_i} produce
\begin{align*}
  \E_{\tau_i}\left[\int_{\tau_i}^\infty -e^{-\rho_i(s-\tau_i)-\alpha_i c_s}ds\right]  
  &= \lim_{t\rightarrow\infty} \E_{\tau_i}[V_t] \\
  &\leq \lim_{t\rightarrow\infty} \E_{\tau_i}[V_{i,t}] 
  = \E_{\tau_i}\left[\int_{\tau_i}^\infty -e^{-\rho_i(s-\tau_i)-\alpha_i c_{i,s}}ds\right].
\end{align*}
Therefore, $(\theta_i,c_i)$ is optimal for agent $i$'s utility maximization problem \eqref{optimization}.

\smallskip

\noindent{\bf Market clearing holds.} We need to prove that market clearing holds in both the annuity market and the real goods market. By \eqref{Aj_original_infinite} and \eqref{ki_original_infinite},
\begin{align}
  \sum_{i=1}^j k_i(j) 
  &= \sum_{i=1}^j \tfrac{(\beta_i(j)+\lambda) A(j)-1}{\alpha_i}  - \sum_{i=1}^j\tfrac{ \lambda A(j+1) e^{-\alpha_i (k_i(j+1)-k_i(j))}}{\alpha_i}
= 0. \label{ki_sum=0}
\end{align}
For $i \in \mathbb{N}$ and $t\geq 0$, the definition of $\theta_i$ in \eqref{Xici} and \eqref{ki_sum=0} imply that 
\begin{align}
  \sum_{i=1}^{N_t} \theta_{i,t}  
  &= \sum_{i=1}^{N_t} \Big( \theta_{i,\tau_i} - \sum_{j=i}^{N_t} \int_{\tau_j\wedge t}^{\tau_{j+1}\wedge t} \tfrac{k_i(j)}{A(j)} du   \Big) \nonumber\\
 & = 1 - \sum_{j=1}^{N_t}  \int_{\tau_j\wedge t}^{\tau_{j+1}\wedge t} \frac{1}{A(j)}\Big(\sum_{i=1}^j k_i(j) \Big) du=1. \label{annuity_clear}
\end{align}
Hence, market clearing holds for the annuity market.
For the real goods market clearing, \eqref{ki_sum=0} and \eqref{annuity_clear} produce the desired result:
\begin{align*}
  \sum_{i=1}^{N_t} c_{i,t}
  &= \sum_{i=1}^{N_t} \theta_{i,t}  + \sum_{i=1}^{N_t} \eps_{i,t} + \sum_{i=1}^{N_t} k_i(N_t) = 1 + \sum_{i=1}^{N_t} \eps_{i,t}.
\end{align*}
\end{proof}

\section{Numerical Examples}\label{section:examples}
First we recall the definitions of $\{\beta_i(j)\}_{j\geq N_0, 1\leq i\leq j}$ and $\{\beta_{\Sigma}(j)\}_{j\geq N_0}$ from \eqref{param-defs}:
$$
  \beta_i(j) := \rho_i +\alpha_i\mu_i(j) - \frac12\alpha_i^2 \sigma_i^2(j)
  \quad \text{and} \quad
  \beta_{\Sigma}(j) := \alpha_\Sigma(j) \Big(\sum_{i=1}^j \tfrac{\beta_i(j)}{\alpha_i}\Big).
$$
All agent parameters ($\rho_i$, $\alpha_i$, $\mu_i(j)$, $\sigma_i(j)$) contribute to $\beta_i(j)$ values, which are aggregated to produce $\beta_\Sigma(j)$.  Moreover, the $\beta_i(j)$ values are fundamental to determining $\alpha_i k_i(j)$ and the annuity price, whereas all other agent input parameters affect those equilibrium outcomes only through $\beta_i(j)$.  For this reason, we focus our numerical examples on $\beta_i(j)$- and $\lambda$-dependence without considering specific variations in the other agent input parameters.

\begin{figure}[t]
  \begin{center}
  \includegraphics[width=2.5in]{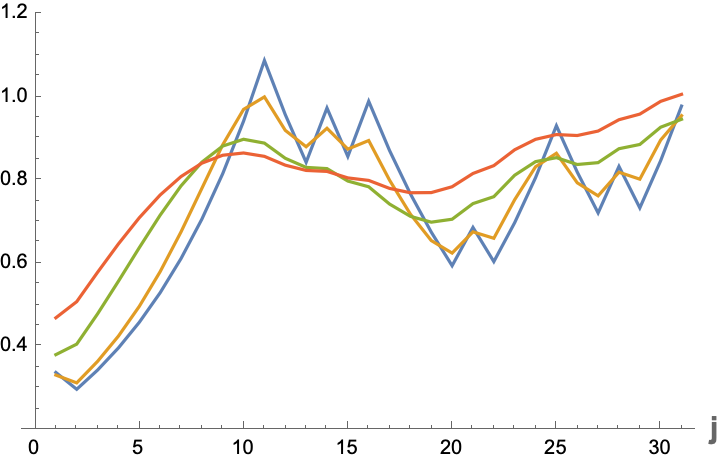}
  \end{center}
  \caption{We assign arbitrary values to the process $\{\beta(N_0+j)\}_{1\leq j \leq 50}$ for \eqref{beta}. We plot $\{1/\beta(N_0+j)\}_{1\leq j \leq 30}$ (\textcolor{myblue}{blue} line) and the equilibrium annuity price processes $\{A(N_0+j)\}_{1\leq j \leq 30}$ for varying $\lambda$ (\textcolor{myorange}{orange}, \textcolor{mygreen}{green}, \textcolor{myred}{red} lines for $\lambda=1, 5,10$) are plotted. We set $\alpha_i=0.5$ for all $i$. 
}
\label{fig1}
\end{figure}

We first illustrate the dependency of the equilibrium annuity price process on $\lambda$ in Figure~\ref{fig1}. Let $m$ represent the total number of agents born, as stated in Lemma~\ref{solutions1} and Theorem~\ref{thm:finite-verification}. For simplicity, we assume that 
\begin{align}
\beta_i(j)=\beta(j) \quad \textrm{for} \quad N_0 \leq j \leq N_0+m, \,\, 1\leq i \leq j. \label{beta}
\end{align}
Then, we have $\beta_\Sigma(j)=\beta(j)$ and inductively check that the system in Lemma~\ref{solutions1} produces $k_i(j)=0$ for $N_0\leq j\leq N_0+m$ and $1\leq i \leq j$. As a result, the system is simplified as
\begin{equation}
\begin{split}\label{simple_A}
A(N_0+m)&= \tfrac{1}{\beta(N_0+m)}, \quad A(j)=\tfrac{\lambda A(j+1) + 1}{\lambda + \beta(j)}\quad \textrm{for} \ \ N_0 \leq j < N_0+m.
\end{split}
\end{equation}
Observe that $A(j)=\frac{1}{\beta(j)}$ when $\lambda=0$. This produces
\begin{align*}
\tfrac{\partial}{\partial \lambda} A(j) \Big|_{\lambda=0} 
&= \tfrac{(A(j+1) + \lambda \frac{\partial}{\partial \lambda} A(j+1))(\lambda + \beta(j))- (\lambda A(j+1)+1)}{(\lambda + \beta(j))^2}\Big|_{\lambda=0}\\
&=\tfrac{1}{\beta(j)} \left(\tfrac{1}{\beta(j+1)} - \tfrac{1}{\beta(j)} \right).
\end{align*}
Therefore, for $\lambda$ close to zero and $\frac{1}{\beta(j+1)}\neq \frac{1}{\beta(j)}$, $A(j)$ should be located between $\frac{1}{\beta(j+1)}$ and $\frac{1}{\beta(j)}$. This observation aligns with the graphs in Figure~\ref{fig1}, which show that the equilibrium annuity process $A$ exhibits a reduced amplitude of oscillation (tracking $1/\beta$ more loosely) as $\lambda$ increases.

\begin{figure}[t]
  \begin{center}
  \includegraphics[width=2.5in]{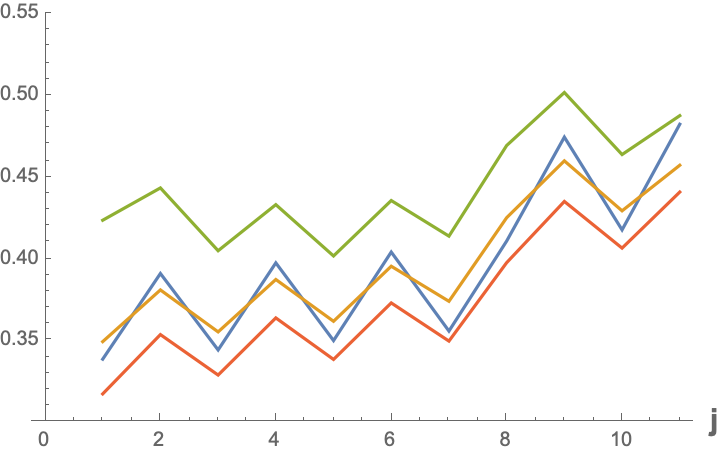}
  \end{center}
  \caption{We assign arbitrary values to the process $\{\beta_{\Sigma}(N_0+j)\}_{1\leq j \leq 50}$ and generate $\beta_i(j)$ parameters by \eqref{hetero} with $a=5$. We plot $\{1/\beta_{\Sigma}(N_0+j)\}_{1\leq j \leq 10}$ (\textcolor{myblue}{blue} line) and the equilibrium annuity price processes $\{A(N_0+j)\}_{1\leq j \leq 10}$ for different scenarios in \eqref{hetero} are plotted. We set $\lambda=1$ and $\alpha_i=0.5$ for all $i$. }
\label{fig2}
\end{figure}

Second, we demonstrate how the heterogeneity of $\{\beta_i(j)\}_{1\leq i \leq j}$ among agents can impact the equilibrium annuity price process. In Figure~\ref{fig2}, given the process $\{\beta_{\Sigma}(N_0+j)\}_{1\leq j \leq 50}$, we plot  $\{ 1/{\beta_{\Sigma}(N_0+j)}\}_{1\leq j \leq 10}$ as the \textcolor{myblue}{blue} line. The \textcolor{myorange}{orange}, \textcolor{mygreen}{green}, and \textcolor{myred}{red} lines are graphs of $\{A(N_0+j)\}_{1\leq j \leq 10}$, generated using the varying $\beta_i(j)$ parameters given by
\begin{align}
\beta_i(j) =
\begin{cases}
 \beta_\Sigma(j), &\textrm{for the \textcolor{myorange}{orange} line}\\
   \beta_\Sigma(j) - a + 2a\cdot \frac{i-1}{j-1}, &\textrm{for the \textcolor{mygreen}{green} line}\\
 \beta_\Sigma(j) - a + 2a\cdot \frac{j-i}{j-1}, &\textrm{for the \textcolor{myred}{red} line} 
\end{cases},\label{hetero}
\end{align}
for a constant $a>0$. The $\beta_i(j)$ parameters are chosen so that $\beta_\Sigma(j)$ remains the same amongst the \textcolor{myorange}{orange}, \textcolor{mygreen}{green}, and \textcolor{myred}{red} lines for all $1\leq j\leq 50$. The \textcolor{myorange}{orange} line represents the annuity price in the homogeneous scenario, and it exhibits the same reduced oscillatory behavior of $\{1/\beta_\Sigma(N_0+j)\}_{1\leq j\leq 10}$ seen in Figure~\ref{fig1}.

In Figure~\ref{fig2}, the \textcolor{mygreen}{green} scenario introduces heterogeneity into the choice of $\beta_i(j)$ amongst agents $i$ for each fixed j.  Younger agents, corresponding to larger $i$ values, have larger $\beta_i(j)$ values.  Hence, the younger agents place more importance on the present day than the future in \eqref{optimization}.  All \textcolor{mygreen}{green} scenario agents decrease their relative time preference $\beta_i(j)- \beta_{\Sigma}(j)$ as they get older, where getting older corresponds to larger $j$ values.  As all agents, young and old, age, they increasingly care more about the future than the present in \eqref{optimization}.  The $\beta_i(j)$ heterogeneity exhibited in the \textcolor{mygreen}{green} scenario leads to an increase in annuity prices compared to the \textcolor{myorange}{orange} homogeneous scenario.  Conversely, the annuity prices in the heterogeneous \textcolor{myred}{red} scenario are decreased due to younger agents placing more importance on the future, relative to the older agents.


\begin{figure}[t]
  \begin{center}
$  \begin{array}{cc}
  \includegraphics[width=2in]{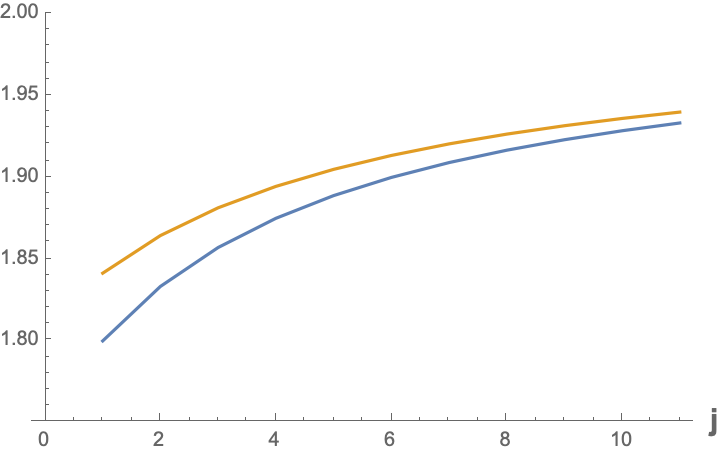} &   \includegraphics[width=2in]{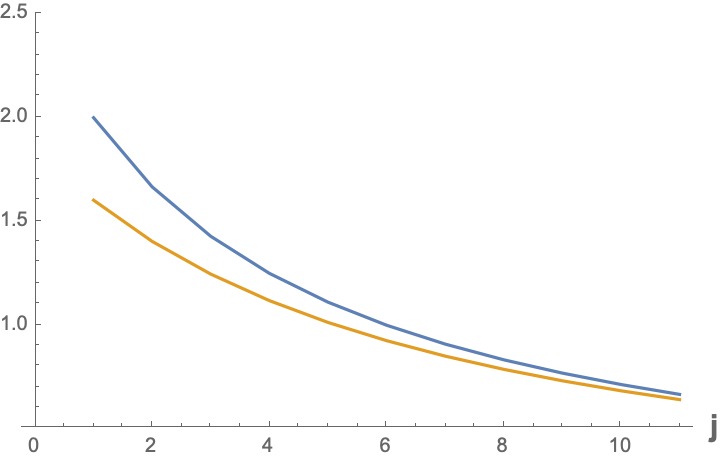} 
\end{array}$
  \end{center}
  \caption{We consider homogenous agents with $\beta_{\Sigma}(j)=\frac{j}{2j-1}$ for the left and $\beta_{\Sigma}(j)=0.1j$ for the right, $N_0\leq j \leq N_0+50$.  
  In both graphs, the \textcolor{myblue}{blue} line is the graph of $\{1/\beta_{\Sigma}(N_0+j)\}_{1\leq j \leq 10}$ and the \textcolor{myorange}{orange} line is the graph of the corresponding equilibrium annuity price processes $\{A(N_0+j)\}_{1\leq j \leq 10}$, computed by \eqref{simple_A} with $N_0=5, m=50, \lambda=1$, $\alpha_i=0.5$ for all $i$.    
}
\label{fig3}
\end{figure}

Lastly, Figure~\ref{fig3} depicts the equilibrium annuity price process for increasing or decreasing $\beta_{\Sigma}(j)$ with respect to $j$. We observe that if $\beta_{\Sigma}(j)$ decreases (or increases, respectively) in $j$, the annuity price $A(j)$ is higher (or lower, respectively) than $1/\beta_{\Sigma}(j)$.

\bibliographystyle{siamplain}
\bibliography{growing-economy-bib}

\end{document}